\documentclass[pra,8pt,twocolumn,superscriptaddress,floatfix,showpacs]{revtex4-2}
\usepackage{tikz}
\usetikzlibrary{arrows.meta, positioning,shapes,calc}

\usepackage[utf8]{inputenc}
\usepackage[T1]{fontenc}     
\usepackage[british]{babel}  
\usepackage[sc,osf]{mathpazo}
\usepackage[scaled=0.86]{berasans}  
\usepackage[colorlinks=true, citecolor=blue, urlcolor=blue]{hyperref}  
\usepackage{graphicx} 
\usepackage{subfig}
\usepackage[babel]{microtype}  
\usepackage{amsmath,amssymb,amsthm,bm,amsfonts,mathrsfs,bbm} 
\usepackage{comment}

\usepackage{xspace}  
\usepackage{pgfplots}
\usepackage{xcolor,colortbl}
\def\ba{\begin{equation}}
	\def\ea{\end{equation}}
\def\bea{\begin{eqnarray}}
	\def\eea{\end{eqnarray}}
\def\ben{\begin{equation*}}
	\def\een{\end{equation*}}
\def\bean{\begin{eqnarray*}}
	\def\eean{\end{eqnarray*}}
\def\bma{\begin{mathletters}}
	\def\ema{\end{mathletters}}
\def\bi{\begin{itemize}}
	\def\ei{\end{itemize}}

\newcommand{\be}{\begin{equation}}
	\newcommand{\ee}{\end{equation}}

\newcommand{\kommentar}[1]{}

\newcommand{\forget}[1]{}

\newtheorem{theorem}{Theorem}

\newtheorem{corollary}{Corollary}[theorem]

\begin{document}
	
	\title{
		Full Network Nonlocality Based Security In Quantum Key Distribution
	}
	\author{Kaushiki Mukherjee}
	\email{kaushiki.wbes@gmail.com}
	\affiliation{Department of Mathematics, Jhargram Raj College,Jhargram, West Bengal 721507, India.}
	
	
\begin{abstract}
In the last decade research of quantum nonlocality has moved beyond the regime of standard Bell nonlocality to consider network-based experimental set-ups involving multiple independent sources. Notion of full network nonlocality has emerged as some truly network phenomena 
that cannot be realized in traditional Bell experiments. Present work manifests utility of such form of truly network non-classicality in designing a four partite network-based entanglement assisted quantum key distribution protocol. To be more precise, security of the protocol relies upon full network nonlocality detection via violation of some suitable trilocal inequality. Based on the quantum bit error rate and violation of trilocal inequality, arbitrary two qubit entangled states are characterized in accordance with their utility in successfully executing the protocol. Intuitively, owing to connected structure of entangled sources, any genuine form of network nonlocality may offer advantage over standard Bell nonlocality for designing secure key distribution protocols. To establish that as a fact, another QKD protocol relying upon Bell-CHSH nonlocality detection in all pairs of sender and a receiver party is designed. The former turns out to be more secure compared to the latter. Importantly, while the quantum bit error rate can be less than $14.6\%$ exploiting Bell-CHSH nonlocality, it can be reduced below $13.7\%$ by exploiting full network nonlocality. 
\end{abstract}
	\maketitle
	
	
\section{Introduction}\label{intro}
Basic task of a key distribution protocol is to generate a private key between trusted parties who are allowed to communicate among themselves over public channels\cite{retsispg10}. Speaking of secure key generation, protocols that can avail quantum resources largely outperform protocols which only involve classical resources\cite{wies,QC1,QC2}. Such advantage offered by quantum key distribution protocols basically stems from the intrinsic randomness of quantum particles\cite{bb84,eck,cr,cr1,cr2,cr3,cr4,cr5,cr6}. This is in contrast to classical protocols which solely rely upon pseudo randomness and computational complexity\cite{retsispg10,QC1}.\\
	Entire class of QKD protocols can be broadly classified in two types: \textit{preparation and measurement type} and \textit{entanglement assisted type}. The first QKD protocol(BB84 protocol), designed by Bennet and Brassard\cite{bb84} belongs to the former type. Since BB84 protocol, several QKD protocols based on preparation and measurement strategy have been framed\cite{cr,cr1,cr2,cr3,cr4,cr5,cr6}. On the other hand, Ekert protocol\cite{eck} was the first entanglement assisted type of QKD protocol. Many works have been done in this direction following Ekert's protocol\cite{eck3,eck1,eck2,eck4,eck6,eck5,eck7}. Exploitation of non-classical resource in form of quantum entanglement lies at the root of this type of key distribution protocols. In any entanglement assisted key distribution protocols\cite{eck} the trusted parties($A$ and $B$,say) share an entangled state\cite{hor1}. Precisely, multiple copies of an entangled state are distributed in between $A$ and $B$. They perform local measurements in mutually unbiased bases(MUBs\cite{mub1}) on their respective subsystems. They use a fraction of these outcomes of local measurements in MUBs for security check. Remaining outcomes are used for public reconciliation of their measurement bases and formation of raw key\cite{bran1} from outcomes of identical MUBs. Raw key is then used to extract secure key via suitable information reconciliation\cite{eck8i} and privacy amplification\cite{eck8i} strategies. Present work will consider only generation of raw key in entanglement assisted type of protocols.\\

	Now, in the protocol, communication being made over public quantum channel, any dishonest third party(Eve,say) can intercept, measure the qubits(sent to $A$ and $B$) and then send some new qubits to them thereby hampering security of the key generation by the protocol. Presence of eavesdroppers can be detected by $A$ and $B$ who then abort the protocol. Comparing information content of the  the untrusted party is the most obvious way to verify security in entanglement assisted protocols\cite{eck8}. However, from practical view point, verifying any suitable Bell inequality's violation turns out to be an useful alternative \cite{eck10,eck11,eck9,eck12,eck14,eck13,main}. Such utility of Bell inequality violation for detecting presence of any untrusted party was first pointed out by Ekert\cite{eck}. Since then such Bell inequality's violation based verification strategy has been exploited to design secure key generation protocols\cite{eck1,eck2,eck3,eck4,eck5,eck6,eck7,main}. Many of these research works have also exploited Bell-type inequality violation for minimizing Quantum Bit Error Rate(QBER) generated in QKD protocols\cite{main,selfq}   \\
	Bell inequality violation turns out to be necessary(though not sufficient\cite{eck14,eck13}) criterion for checking security in QKD protocol. Till date, all such verification schemes rely on standard Bell measurement scenario involving a single quantum source. However, with development of quantum technology, study of quantum information science has moved beyond regime of single source frameworks thereby witnessing exploitation of network structure involving more than one quantum source. In this context, one obvious query arises: \textit{can a network framework involving multiple quantum sources offer better security in key distribution?} Present work will explore in this direction.\\
	That source independence assumption reduces requirements to demonstrate non-classicality of quantum correlations compared to usual Bell-CHSH scenario, was first pointed out in \cite{BGP}. Framework of correlations in bilocal scenario was then formalized in \cite{BRGP}. A series of works then followed exploiting several intriguing features of quantum correlations in network under source independence assumption\cite{frtz1,gis1,Tava,km1,km2,bilo2,lee,bilo3,bilo4,km3,nr4,km4,bilo7,bilo5,km5,nr2,nr1,ejm,bilo6,nr3,birev,km7,kau6,nr5,km8,bilo11}. In past few years, notion of full network nonlocality has been introduced\cite{bilo6}. Such form of genuine nonlocal network correlations necessitates distribution of nonlocal resources by all links in a network. Manifestation of full network nonlocality in $n$-local networks\cite{bilo6}for designing an entanglement assisted QKD protocol will form the mainstay of present exploration.\\
	Till date, to the best of author's knowledge, limited effort has been given to design $n$-local network based QKD protocols\cite{lee}. In \cite{lee} the author introduced QKD protocols using such type of networks. Using DAG approach, the authors exploited standard network nonlocality to build more secure key distribution protocols\cite{lee} compared to standard Bell scenario based protocols. Observations, pertaining to security analysis, made therein relied upon violation of $n$-local inequalities(denoting collection of such inequalities as $\mathcal{I},$say). However, in recent times it has been argued that conceptually standard network nonlocality is neither novel(compared to standard Bell nonlocality) nor truly a network phenomenon\cite{birev,bilo6}. In \cite{bilo6}, the authors pointed out that violation of existing $n$-local inequalities($\mathcal{I}$) can be obtained even if all the bipartite sources are not nonlocal. Consequently such violation cannot be attributable to intrinsic structure of network(multiple sources framework). In this context, a new notion of network nonlocality, referred to as \textit{full network nonlocality(FNN)} was introduced in \cite{bilo6}. Such form of non-classicality turns out to be a truly network phenomenon. In this work, full network nonlocality will be exploited for the purpose of security analysis in a QKD protocol.  \\ 
A network based QKD protocol involving four trusted parties is designed here. The network underlying the protocol is a star-shaped $4$-local network\cite{bilo6} with a single sender and three receiver parties. Precisely the sender party creates three two-qubit entangled states and distributes one qubit of each states to a receiver party. Security steps in such a protocol exploit violation of an existing trilocal inequality detecting full network nonlocality. Threshold value of QBER is derived in absence of violation of the trilocal inequality. However, on observing violation, bit error rate can be diminished below the critical value. Such results aid in characterizing arbitrary two qubit states for designing the protocol.\\
In literature violation of Bell-CHSH inequality\cite{cla} has been the mainspring for secure key generation in an entanglement assisted QKD protocols. In this context, it becomes imperative to explore whether security in above network based protocol can be offered by Bell-CHSH inequality's violation. For this purpose, another four-partite network based protocol is designed which is similar to the former one. However, here security is provided by Bell-CHSH violation in each of three pairs of sender and a receiver party(for details see subsec.\ref{chshv}). Interestingly, it turns out that violation of trilocal inequality helps in framing more stringent security criteria compared to that depending upon Bell-CHSH violations only.  \\
Rest of the work is organised as follows: in sec.\ref{pre} basic preliminaries are provided. First network based protocol is given in sec.\ref{trilov} followed by characterization of states in sec.\ref{states}. Second protocol is designed in sec.\ref{cpmp}. Finally  some concluding remarks are provided in sec.\ref{conc}.
	\section{Preliminaries}\label{pre}
\subsection{Bloch Matrix Representation}
The density matrix of an arbitrary bipartite two qubit state($\rho$) is given by\cite{luo,gam}:
\begin{equation}\label{st4}
\small{\rho}=\small{\frac{1}{4}(\mathbb{I}_{2}\times\mathbb{I}_2+\vec{\mathbf{a}}.\vec{\sigma}\otimes \mathbb{I}_2+\mathbb{I}_2\otimes \vec{\mathbf{b}}.\vec{\sigma}+\sum_{i_1,i_2=1}^{3}r_{i_1i_2}\sigma_{i_1}\otimes\sigma_{i_2})},
\end{equation}
with $\vec{\sigma}$$=$$(\sigma_1,\sigma_2,\sigma_3), $ $\sigma_{j_k}$ denoting Pauli operators inclined along $3$ mutually perpendicular directions($i_k$$=$$1,2,3$). $\vec{\mathbf{a}}$$=$$(a_1,a_2,a_3)$ and $\vec{\mathbf{b}}$$=$$(b_1,b_2,b_3)$ denote local bloch vectors($\vec{\mathbf{a}},\vec{\mathbf{b}}$$\in$$\mathbb{R}^3$) corresponding to party $A$ and $B$ respectively with $|\vec{\mathbf{a}}|,|\vec{\mathbf{b}}|$$\leq$$1$ and $(r_{i,j})_{3\times3}$ denotes the correlation tensor matrix $\mathcal{R}$(real matrix).\\
Components $r_{j_1j_2}$ of $\mathcal{R}$ are given by $r_{j_1j_2}$$=$$\textmd{Tr}[\rho\,\sigma_{j_1}\otimes\sigma_{j_2}].$ \\
	$\mathcal{R}$ can be diagonalized by applying suitable local unitary operations\cite{gam,luo},where the simplified expression is then given by:
	\begin{equation}\label{st41}
		\small{\rho}^{'}=\small{\frac{1}{4}(\mathbb{I}_{2}\times\mathbb{I}_2+\vec{\mathbf{m}}.\vec{\sigma}\otimes \mathbb{I}_2+\mathbb{I}_2\otimes \vec{\mathbf{n}}.\vec{\sigma}+\sum_{i=1}^{3}t_{i}\sigma_{i}\otimes\sigma_{i})}.
	\end{equation}
	Correlation tensor in Eq.(\ref{st41}) is given by $T$$=$$\textmd{diag}(t_{1},t_{2},t_{3})$ where $t_{1},t_{2},t_{3}$ are the eigen values of $\sqrt{\mathcal{R}^{T}\mathcal{R}},$ i.e., singular values of $\mathcal{R}.$
	\subsection{Entanglement Assisted Bipartite QKD Protocol}\label{qkd}
	Consider an entanglement assisted quantum key distribution(QKD) protocol\cite{eck,main} that involves two trusted parties $A$ and $B.$ At the end of the protocol they try to establish a secure key. Let $A$ prepare several copies of a two-qubit state($\rho$) and send one qubit of each such copies of $\rho$ to $B$. After distribution of qubits, both of them perform local measurements on their respective qubits. For local measurements, each of them selects randomly from a collection of $n$ number of $d$-dimensional mutually unbiased bases(MUBs). Let  $\mathfrak{C}_{1}$$=$$\{\mathfrak{B}_{1}^{(\alpha)}\}_{\alpha=1}^{n}$ denote the collection of $n$ MUBs from which each of $A$ and $B$ chooses randomly. $\forall\,\alpha,$ $\mathfrak{B}_{1}^{(\alpha)}$ are given by:
	\begin{equation}\label{basis1}
		\mathfrak{B}_{1}^{(\alpha)}=\{|\psi^{\alpha}_i\rangle\}_{i=1}^d
	\end{equation}
	If $\mathcal{O}_{1}^{(\alpha)}$ denote operators corresponding to the basis $\mathfrak{B}_{1}^{(\alpha)}$, then those are given by:
	\begin{equation}\label{opes}
		\mathcal{O}_{1}^{(\alpha)}=\{|\psi^{\alpha}_i\rangle\langle \psi^{\alpha}_i|\}_{i=1}^d,\,\forall \alpha=1,...,n.
	\end{equation}
	After performing measurements on $n$ copies of $\rho,$ $A$ and $B$ use a fraction of the measurement outcomes to verify whether corresponding correlations are nonlocal by testing violation of a suitable Bell inequality. For the remaining part of the measurement outcomes, the trusted parties publicly compare their measurement bases and keep outcomes only corresponding to the identical bases while discard the remaining outcomes. The outcomes obtained from identical bases form the sifting key\cite{bran1}. They use a part of this key, i.e., measurement outcomes when their bases are same, to calculate quantum bit error rate(QBER). If QBER is less than some preset critical value, it is used to extract secure key by information reconciliation\cite{eck8i} and privacy amplification\cite{eck8i}.
	\subsection{Quantum Bit Error Rate}\label{qber}
	For any given state $\varrho,$ QBER($Q$) is considered as the average mismatch between $A$ and $B$'s outcomes obtained when they measure in identical bases. With $\mathfrak{C}_{1}$ denoting collection of $n$ MUBs(Eq.(\ref{basis1})) from which each of the two parties(as considered above) selects randomly, QBER can be expressed as:
	\begin{equation}\label{basis2}
		Q=\frac{1}{n}\sum_{\beta=1}^{n}\sum_{i\neq j=1}^d \langle \psi^{\beta}_i\psi^{\beta}_j|\rho|\psi^{\beta}_i\psi^{\beta}_j\rangle
	\end{equation}
	The above expression of $Q$ holds for any $n$$\leq$$d+1$ number of bases. For instance, when $\rho,$ shared between $A$ and $B$ is a two qubit state($d$$=$$2$) and each party chooses from a collection of two bases, i.e., $|\mathfrak{C}_{1}|$$=$$2,$ QBER is given by\cite{main}:
	\begin{equation}\label{basis3}
		Q=\frac{1}{4}(2-\vec{u_1}.T\vec{u_1}-\vec{u_2}.T\vec{u_2})
	\end{equation}
	where $\vec{u}_i\,(i$$=$$1,2)$ denote Bloch vectors of the measurement bases and $T$ denotes the correlation tensor(Eq.(\ref{st4})). Minimization over all possible measurement directions $\vec{u}_1,\vec{u}_2$ gives:
	\begin{equation}\label{basis4}
		Q\geq \frac{1}{4}(2-\textmd{max}_{i,j}(|t_{i}|+|t_{j}|)),\,i\neq j
	\end{equation}
	where $t_{1},t_{2},t_{3}$ denote the singular values of correlation tensor $T$ of $\rho^{'}$(Eq.(\ref{st41})) and hence singular values of correlation tensor $T$ of $\rho$(Eq.(\ref{st4})).
\subsection{$n$-local Star Network}
$n$-local star network($\mathcal{N}_{n-star},$say) is a non-linear network connecting a single central party $A_1$ to $n$ edge(extreme) parties $A_2,A_3,...,A_{n+1}$ (see Fig.\ref{star}) \\
$A_1$receives one particle from each source $\mathbf{S}_1,\mathbf{S}_2,...,\mathbf{S}_n$  whereas each of the extreme parties($A_{i+1}$) receives one particle from one source($\mathbf{S}_i,i$$=$$1,2,...,n$). $\forall i,\mathbf{S}_i$ is characterized by variable $\lambda_i(i$$=$$1,2,...,n).$ As $\mathbf{S}_1,\mathbf{S}_2,...,\mathbf{S}_n$ are independent of each other, joint distribution of the variables $\lambda_1,\lambda_2,...,\lambda_n$ is factorizable:
\begin{equation}\label{tr1}
	\mathbf{p}(\lambda_1,\lambda_2,...,\lambda_n)=\Pi_{i=1}^n \mathbf{p}_i(\lambda_i),
\end{equation}
where $\mathbf{p}_i$ denotes the normalized distribution of $\lambda_i,\forall i.$ Source independence condition(Eq.(\ref{tr1})) represents the $n$-local constraint\cite{BRGP}. \\
Each of the edge parties $A_i$ chooses to perform from any one of two possible dichotomic measurements($x_i$$\in$$\{0,1\}$) whereas the central party performs a fixed measurement. Let $\bar{a}_1$ denote output bit string resulting from single measurement of $A_1$ and let $a_i$$\in$$\{0,1\}$ denote outcomes of $x_i\forall i.$ Let $p(\bar{a}_1,a_2,...,a_{n+1}|x_2,...,x_{n+1})$ denote corresponding measurement correlation term.
	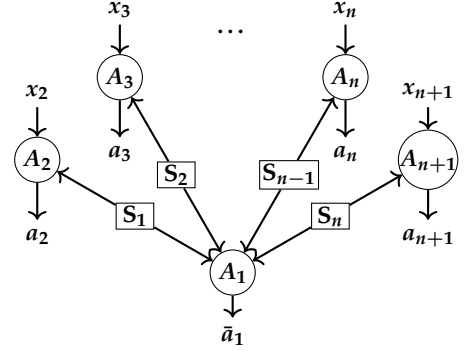
\begin{figure}
	\begin{tikzpicture}[
		node distance=1.4cm and 1.6cm,
		every node/.style={font=\boldmath\bfseries\small},
		meas/.style={
			draw, circle, minimum size=0.6cm,
			inner sep=0pt, align=center,
			text height=1.2ex, text depth=0ex
		},
		source/.style={
			draw, rectangle, minimum width=0.5cm, minimum height=0.3cm,
			inner sep=1pt, align=center
		},
		arr/.style={->, thick}
		]
		
		\node[meas] (A1) at (0,0) {$A_1$};
		
		\node[meas] (A2) at (150:3cm) {$A_2$};
		\node[meas] (A3) at (120:3cm) {$A_3$};
		
		\node[meas] (An) at (60:3cm) {$A_n$};
		\node[meas] (Anp1) at (30:3cm) {$A_{n+1}$};
		
		\node at (90:3.2cm) {\dots};
		
		\node[source] (S1) at ($(A1)!0.5!(A2)$) {$\mathbf{S}_1$};
		\node[source] (S2) at ($(A1)!0.5!(A3)$) {$\mathbf{S}_2$};
		\node[source] (Sn) at ($(A1)!0.5!(An)$) {$\mathbf{S}_{n-1}$};
		\node[source] (Snp1) at ($(A1)!0.5!(Anp1)$) {$\mathbf{S}_n$};
		
		\foreach \X in {S1, S2, Sn, Snp1} {
			\draw[arr] (\X) -- (A1);
		}
		\draw[arr] (S1) -- (A2);
		\draw[arr] (S2) -- (A3);
		\draw[arr] (Sn) -- (An);
		\draw[arr] (Snp1) -- (Anp1);
		
		\foreach \i/\A in {2/A2, 3/A3, n/An, {n+1}/Anp1} {
			\node at ($(\A)+(0,0.9)$) {$x_{\i}$};
			\draw[arr] ($(\A)+(0,0.7)$) -- (\A);
			\draw[arr] (\A) -- ++(0,-0.8) node[below] {$a_{\i}$};
		}
		
		\draw[arr] (A1) -- ++(0,-0.6) node[below] {$\bar{a}_1$};
	\end{tikzpicture}
	\caption{Schematic Diagram of $n$-local star network}
	\label{star}
\end{figure}
\subsection{Full Network Nonlocality(FNN)}
In any given measurement scenario, where all the sources are independent(Eq.(\ref{tr1})), network correlations are said to be \textit{fully network nonlocal}\cite{bilo6} if and only if one cannot model the correlations by a hidden variable(HV) model such that at least one source in the network is of a local-variable nature whereas all the remaining sources, in general, can be independent nonlocal resources.\\
For instance, as required for present work, consider the measurement scenario corresponding to $n$-local star network($\mathcal{N}_{n-star}$).\\
$p(\bar{a}_1,a_2,...,a_{n+1}|x_2,...,x_{n+1})$ is \textit{not fully network nonlocal} if it can be decomposed as:
\begin{eqnarray}\label{fnn1}
	\small{p(\bar{a}_1,a_2,...,a_{n+1}|x_2,...,x_{n+1})}=\sum_{\lambda_1,...,\lambda_n}\mathbf{p}_j(\lambda_j)	\small{p(a_j|x_j,\lambda_j)}\mathbf{Q}\nonumber\\
\textmd{\small{where}}\,\,\mathbf{Q}\,\, \textmd{\small{is given by}}\quad\quad\nonumber\\	\small{p(\bar{a}_1,a_2,..,a_{j-1},a_{j+1},..,a_{n+1}|x_2,..,x_{j-1}},\nonumber\\
\quad\quad\quad	\small{x_{j+1},..,x_{n+1},\lambda_1,..,\lambda_n)}\nonumber\\
\end{eqnarray}
$\mathbf{p}_j(\lambda_j)$ denotes probability distribution of the local hidden variable $\lambda_j$ corresponding to $j^{th}$ source $\textbf{S}_j$ shared between the parties $A_1$ and $A_j.$\\
Eq.(\ref{fnn1}) points out that for any $j$$\in$$\{1,2,,...,n$\},$ j^{th}$ source($\mathbf{S}_j$) is characterized by a local hidden variable $\lambda_j.$ Hence, if at least one of $n$ sources can be modeled by a local hidden variable, then even if rest $n$$-$$1$ sources are maximally nonlocal(modeled by bipartite no-signalling box), corresponding network correlations are \textit{not fully network nonlocal.} 
\subsection{Detection of FNN in Trilocal Star Network}
In \cite{bilo6}, the authors gave a trilocal inequality whose violation indicates full network nonlocality of corresponding $4$-partite measurement correlations. The correlators based inequality\cite{bilo6} is given by:
	\begin{eqnarray}\label{ineqs}
	&& \frac{1}{2}\sum_{i=1}^{4}|J_{i}|^{\frac{1}{3}}\leq  2^{\frac{1}{3}},\,  \textmd{where}\\
	&& J_i=\frac{1}{2^3}\sum_{x_2,x_3,x_{4}}(-1)^{g_i(x_2,x_3,x_{4})}\langle A_{(1)}^{(i)}A_{x_2}^{(2)}A_{x_{3}}^{(3)}A_{x_{4}}^{(4)}\rangle\nonumber\\
	&&\langle A_{(1)}^{(i)}A_{x_2}^{(2)}A_{x_3}^{(3)}A_{x_{4}}^{(4)}\rangle=\sum_{\mathcal{D}_1}(-1)^{\tilde{\mathfrak{a}}_1^{(i)}+\textbf{a}_2+\textbf{a}_3+\textbf{a}_{4}}M_1,\nonumber\\
	&& \textmd{\small{where}}\,M_1=\small{p(\overline{\textbf{a}}_1,
		\textbf{a}_2,\textbf{a}_3,\textbf{a}_{4}
		|x_2,x_3x_4)}\,\textmd{\small{and}}\nonumber\\
	&&\mathcal{D}_1=\{\textbf{a}_{11},\textbf{a}_{12},\textbf{a}_{13},\textbf{a}_2,\textbf{a}_3,\textbf{a}_{4}|\textbf{a}_{ij},\textbf{a}_k\in\{0,1\}\}\nonumber\\
\end{eqnarray}
In Eq.(\ref{ineqs}), $\forall i$$=1,2,3,4,$ $\tilde{\mathfrak{a}}_1^{(i)}$ stands for an output bit generated by classical post-processing of the raw output string $\overline{\textbf{a}}_1$$=$$(\textbf{a}_{11},\textbf{a}_{12},\textbf{a}_{13})$ of $A_1.$ In Eq.(\ref{ineqs}), $\forall i,\, g_i$ are functions of the input variables $x_2,x_3,x_{4}$ of the extreme parties\cite{Tava}. Each $g_i$ contains an even number of $x_2,x_3,x_{4}.$ Classical post-processed bits $\tilde{\mathfrak{a}_1^{(i)}}$ from the output string $\overline{\textbf{a}_1}$ and corresponding functions $g_i(x_2,x_3,x_{4})$ are provided in Table.\ref{table:ta6}.
\begin{table}
\caption{Details of the classically post-processed bits $\tilde{a}_1^{(i)}$ and also $g_i(x_2,x_3,x_{4})$ appearing in Eq.(\ref{ineqs}) are displayed here.}
\begin{center}
\begin{tabular}{|c|c|}
\hline
$\tilde{a}_1^{(i)}$ &$g_i(x_2,x_3,x_{4})$  \\
\hline
$\tilde{a}_1^{(1)}$$=$$\textbf{a}_{11}$,
$\tilde{a}_1^{(2)}$$=$$\textbf{a}_{11}$$\oplus$$\textbf{a}_{12}$$\oplus$$1$&$g_1$$=$$0,$
$g_2$$=$$x_2$$+$$x_3$\\
$\tilde{a}_1^{(3)}$$=$$\textbf{a}_{11}$$\oplus$$\textbf{a}_{13}$$\oplus$$1$
&$g_3$$=$$x_2$$+$$x_4$\\
$\tilde{a}_1^{(4)}$$=$$\textbf{a}_{11}$$\oplus$$\textbf{a}_{12}$$\oplus$$\textbf{a}_{13}
$$\oplus$$1$&$g_4$$=$$x_3$$+$$x_4$\\
\,&\,\\
\hline
$\tilde{a}_1^{(1)}$$=$$\textbf{a}_{11}$,
$\tilde{a}_1^{(2)}$$=$$\textbf{a}_{11}$$\oplus$$\textbf{a}_{12}$$\oplus$$1$&
$g_1$$=$$0,$
$g_2$$=$$x_2$$+$$x_3$\\
			\hline
		\end{tabular}
	\end{center}
	\label{table:ta6}
\end{table}
Now considering a trilocal network where each source($\mathbf{S}_i$) distributes an arbitrary two-qubit state(Eq.\ref{st41}). Let $A_1$ measure joint state of three qubits in tripartite GHZ basis. Let each of the edge parties($A_{i+1}$) perform single qubit projective measurement in any one of two arbitrary directions:$\{\vec{x}_{i+1}.\vec{\sigma},\vec{x}^{'}_{i+1}.\vec{\sigma}\}$ Under these measurement settings, the upper bound($\mathcal{B}_{3-star},$say) of trilocal inequality(Eq.(\ref{ineqs})) is given by \cite{bilo5}: 
\begin{equation}\label{boundstar}
	\mathcal{B}_{3-star}=	\sqrt{(\Pi_{i=1}^3  t_{i,1})^{\frac{2}{3}}+(\Pi_{i=1}^3t_{i,2})^{\frac{2}{3}}}.
\end{equation}
In Eq.(\ref{boundstar}) $t_{i,1}$$\geq$$t_{i,2}$ are the largest two singular values of the correlation tensors of $\rho_{i}(\forall i).$\\
Trilocal inequality (Eq.(\ref{ineqs})) is violated if:
\begin{equation}\label{up211}
	\mathcal{B}_{3-star}>2^{\frac{1}{3}}.
\end{equation}
Violation of Eq.(\ref{up211})imply that the corresponding network correlations are fully network nonlocal.
\section{4-Party QKD Protocol}\label{trilov}
The protocol to be designed here is based on a network involving four legitimate parties $A_1,A_2,A_3,A_4.$ Among them $A_1$ will be the \textit{central} party whereas others will be considered as \textit{extreme} parties. The central party will be the \textit{sender} sending qubit to each of the extreme parties(considered as \textit{receivers}). Independent qubit communication will take place from the sender to the receivers in the sense that distribution of qubits from $A_1$ to $A_i$ will be independent of the distribution from $A_1$ to $A_j$ $\forall i,j$$=$$2,3,4$ with $i$$\neq$$j.$ This independence of qubits distribution corresponds to the \textit{trilocal constraint}. Moreover, there will be no quantum communication in between the extreme parties. 
However, the parties broadcast their outputs so that correlations generated among them can be used to frame security check at some steps in the protocol. 
At the end of the protocol a block-structured secure key with block length $3$ will be shared between $A_1,A_2,A_3,A_4.$ For rest of the paper, let $\mathcal{N}_4$ denote the QKD protocol.
\subsection{Steps Of $\mathcal{N}_4$ }
The steps of $\mathcal{N}_4$ are now detailed below:
\begin{enumerate}
\item \textit{Qubits Preparation and Distribution Stage:} Party $A_1$ prepares $n$ identical copies of three two-qubit entangled state $\rho_{i}(i$$=$$1,2,3)$ and sends one qubit of $\rho_{i}$ to $A_{i+1}(i$$=$$1,2,3).$ \\
So for each two-qubit state $\rho_1,\rho_2,\rho_3,$ one qubit is retained with $A_1$ while the other qubit is now with $A_2,A_3$ and $A_4$ respectively. $A_1$ now has $n$ identical copies of a single qubit of each of $\rho_1,\rho_2,\rho_3$ whereas each of $A_2,A_3,A_4$ has $n$ identical copies of a single qubit of $\rho_1,\rho_2,\rho_3$ respectively. Preparation and distribution of $\rho_i$ is independent of that of $\rho_j$($\forall i$$\neq$$ j$).
\item \textit{Measurement Stage:} 
$A_1$ executes following steps:
\begin{enumerate}
\item[(i)]$\forall i$$=$$1,2,3,$ $A_1$ measures $n_1($$<$$n)$ copies of single qubit of $\rho_i,$ in one of two randomly chosen two-dimensional MUBs $B_{i}^{(1)},B_{i}^{(2)}.$ $\forall i=1,2,3,$ let $\mathcal{B}_{i}$$=$$\{B_{i}^{(k)}\}_{k=1}^2$ denote the collection of single-qubit MUBs used by $A_1.$
\item [(ii)] For each of remaining $n-n_1$ copies, $A_1$ performs single projective measurement in tripartite GHZ basis on the joint state of the three qubits of $\rho_1,\rho_2,\rho_3.$ 
\end{enumerate}
Each of $A_2,A_3,A_4$ executes the following steps:
\begin{enumerate}
\item [(a)] $\forall i$$=$$1,2,3$, $A_{i+1}$ measures $n_1($$<$$n)$ copies of single qubit of $\rho_i,$ in one of two randomly chosen MUBs from $\mathcal{B}_{i}$$=$$\{B_{i}^{(k)}\}_{k=1}^2.$
\item [(b)] $A_i$ performs single-qubit projective measurements randomly in any one of two arbitrary directions $\vec{m}_i.\vec{\sigma},\vec{n}_i.\vec{\sigma}$ on each of remaining $n$$-$$n_1$ copies of $\rho_i.$ 
\end{enumerate}
\item \textit{Trilocal Inequality Testing Stage:} All the parties broadcast their outputs resulting in second step of measurement stage(steps 2(ii) and 2(b)). Using the $4$-partite measurement statistics $P(\bar{a}_1,a_2,a_3,a_4|x_2,x_3,x_4)$ arising in steps (2ii) and (2b), trilocal inequality(Eq.(\ref{ineqs})) is tested. Here, $\forall i=2,3,4$ $(x_i,a_i)$ denotes the input-output pair of $i^{th}$ trusted party $A_i$ and $\bar{a}_1$$=$$(a_{11},a_{12},a_{13})$ denotes $3$-bit output of fixed GHZ basis measurement by $A_1.$\\
If violation of trilocal inequality(Eq.(\ref{ineqs})) is observed then the next step of the protocol is executed. Otherwise, the protocol is aborted.
\item \textit{Sifting Stage:} All the trusted parties now publicly announce their chosen MUBs for each of $n_1$ turns in steps(2i) and (2a) so as to compare whether the MUB used by $A_1$ on qubit  of $\rho_i$ is  identical with the MUB used by $A_{i+1}$ on its qubit of $\rho_i$ $\forall i$$=$$1,2,3.$ In particular:
		\begin{itemize}
			\item \textit{Bases Reconciliation:} $\forall i$$=$$1,2,3,$ single-qubit MUB used by $A_1$ for measuring qubit corresponding to state $\rho_i$ is compared with MUB used by $A_{i+1}.$ \\
			Out of $n_1$ turns, the parties discard their measurement outputs  where for at least one $i$$\in$$\{1,2,3\},$ MUBs chosen by $A_1$ is not same as that chosen by $A_{i+1}$ from $\mathcal{B}_{i}.$\\
			\item \textit{Sifting Keys Generation:} Let $n_2$$<$$n_1$ denote the turns when MUB(chosen from $\mathcal{B}_{i}$) used by $A_1$ and $A_{i+1}$ over $\rho_i$ are identical $\forall i=1,2,3.$ \\
			In any of these $n_2$ turns, let $B_i^{(j_i)}$($j_i$$\in$$\{1,2\}$) denote the basis chosen by $A_1$ and $A_{i+1}$ to measure $\rho_i$($i$$=$$1,2,3$). Let $|\phi_{i,k_{i}}^{(j_i)}\rangle,$ $|\phi_{i,s_{i}}^{(j_i)}\rangle$ denote output obtained by $A_1$ and $A_{i+1}$ respectively.
			Each basis($B_i^{(j_i)}$) being two dimensional, $k_{i},s_{i}$$\in$$\{1,2\}.$\\
			Binary labelings are used to denote these outputs: $\forall i$$=$$1,2,3$ and $k$$=$$1,2,$ let $|\phi_{i,1}^{(k)}\rangle\rightarrow 0$ and $|\phi_{i,2}^{(k)}\rangle\rightarrow 1$ Denoting these outputs as bits, each of the four parties now has a bit string:
			\begin{itemize}
				\item Each of $A_2,A_3,A_4$ has a bit string of length $n_2.$
				\item $A_1$ has a block-structured bit string of length $n_2$ with block length $3$. So the total length of $A_1^{'}$s bit string is $3n_2.$
			\end{itemize} 
			These bit strings of the trusted parties are referred to as \textit{sifted keys} generated in the protocol.
			\item \textit{QBER Generation:} $\rho_1,\rho_2,\rho_3$ are two-qubit entangled states. So, ideally, in each of $n_2$ turns in the sifting stage, $\forall i=1,2,3,$ outputs obtained from $\rho_i$ due to $A_1$ and $A_{i+1}^{'}$s local measurements in identical MUBs, are supposed to be identical.
			Hence, in terms of binary labelings, $\forall m$$=$$1,2,,...,n_2$ and $\forall i=1,2,3,$ bit value at $i^{th}$ position in the $m^{th}$ block of $A_1^{'}$s sifted key is supposed to be same as the bit value in $m^{th}$ position of $A_{i+1}^{'}$s sifted key. \\
			However, in practical scenarios, due to channel noise, imperfect devices and presence of eavesdropper, the outputs of $A_1$ differ from that of the other parties for each of $n_2$ turns. This in turn leads to generation of QBER(Eq.(\ref{basis2})) in the protocol. The parties use a portion of their sifted keys to compute QBER. If QBER exceeds critical value($Q_0$) of QBER(to be discussed later) for the protocol, then the protocol is aborted. Otherwise, next step of the protocol is executed.
		\end{itemize}
		\item \textit{Generation of Secret Key:} Remaining part of the sifted keys of the trusted parties are then subjected to error correction and privacy amplification procedures so as to extract a shorter but secure
		secret key shared among the trusted parties. The secured key is block-structured bit string of length $n_3$(say with $n_3$$<$$n_2$) with block length $3$. So the total length of the secret key bit string is $3n_3.$
	\end{enumerate}
	\subsection{Expression of QBER In $\mathcal{N}_4$}
	In sifting stage, $\forall i$$=$$1,2,3,$ central party $A_1$ chooses from the collection($\mathcal{B}_{i}$) of two MUBs $B_i^{(1)},B_i^{(2)}$ to measure qubit of $\rho_i$ Similarly, each of the extreme parties $A_{i+1}$ chooses from $\mathcal{B}_{i}$ to measure qubit of $\rho_1,\rho_2,\rho_3$ respectively. 	
	$\rho_1,\rho_2,\rho_3$ being two-qubit entangled states in each of $n_2$ turns, ideally in sifting stage, $\forall i=1,2,3,$ outputs obtained from $\rho_i$ due to $A_1$ and $A_{i+1}^{'}$s local measurements in identical MUBs, are supposed to be same. So in each of $n_2$ turns, on measuring $\rho_i$($\forall i$$=$$1,2,3$) in MUB $B_i^{(j_i)},$($\forall j_i$$=$$1,2$), $A_1$ and $A_{i+1}$ are both supposed to obtain same output $|\phi_{i,k_{i}}^{(j_i)}\rangle$ with $k_{i}$$\in$$\{1,2\}.$ Precisely, $\forall j_i$$\in$$\{1,2\},$ when $A_1$ obtains $|\phi_{1,k_{i}}^{(j_i)}\rangle,$ $|\phi_{2,k_{i}}^{(j_i)}\rangle$ and $|\phi_{3,k_{i}}^{(j_i)}\rangle$ after measuring $\rho_1,\rho_2,\rho_3$ respectively, parties $A_2,A_3,A_4$ are also supposed to obtain the same outputs: $|\phi_{1,k_{i}}^{(j_i)}\rangle,$ $|\phi_{2,k_{i}}^{(j_i)}\rangle$ and $|\phi_{3,k_{i}}^{(j_i)}\rangle$ respectively.\\
	Let $\mathcal{O}_{1,i}^{(j_i)}$ denote the operator corresponding to measurement in basis $B_{i}^{(j_i)}$:
	\begin{eqnarray}\label{cr1}
		\mathcal{O}_{i}^{(j_i)}&=&\vec{U}_{i}^{(j_i)}\cdot \vec{\sigma}\,\,j_i=1,2\,\forall i=1,2,3.\nonumber\\
	\end{eqnarray}
	In any turn, when for at least one $i$$\in$$\{1,2,3\},$ there is mismatch in output obtained by $A_1$ with that of output of $A_{i+1}$ from state $\rho_i,$ QBER($\mathbf{Q}$) is generated:
	\begin{eqnarray}\label{cr2}
		\mathbf{Q}&=&\frac{1}{2^3}\sum_{\substack{j_i=1,2\\\forall i=1,2,3}}(\sum_{S}\Pi_{i=1}^3\langle \phi_{i,k_i}^{(j_i)}|\rho_i| \phi_{i,s_{i}}^{(j_i)}\rangle\,\,\textmd{\small{where,}}\\
		S&=&\{k_1,k_2,k_3,s_1,s_2,s_3=1,2|k_i\neq s_i\,\textmd{\small{for at least one }}i\}\nonumber
	\end{eqnarray}
	Clearly, $\mathbf{Q}$ turns out to be the product of mismatch of outputs obtained from $\rho_1,\rho_2$ and $\rho_3,$ averaged over all possible combinations of MUBs chosen by the parties. So computation of QBER involves two summations: one over all possible MUBs from $\mathcal{B}_1,\mathcal{B}_2,\mathcal{B}_3$ while the other over all non-identical outputs from at least one of $\rho_1,\rho_2,\rho_3.$ \\
	Alternatively, QBER can be expressed as:
	\begin{eqnarray}\label{cr3}
		\mathbf{Q}&=&\frac{1}{2^3}\sum_{\substack{j_i=1,2\\\forall i=1,2,3}}(1-\sum_{\substack{k_i=1,2\\\forall i=1,2,3}}\Pi_{i=1}^3\langle \phi_{i,k_i}^{(j_i)}|\rho_i| \phi_{i,k_{i}}^{(j_i)}\rangle)
	\end{eqnarray}
	Using the measurement parameters of $\mathcal{O}_{i}^{(j_i)}$(Eq.(\ref{cr1})) QBER(Eq.(\ref{cr3})) takes the form:
	\begin{eqnarray}\label{cr4}
		\mathbf{Q}&=&1-\frac{1}{2^6}\sum_{\substack{j_i=1,2\\\forall i=1,2,3}}\Pi_{i=1}^3(1+\vec{U}_{i}^{(j_i)}\cdot T_i\vec{U}_{i}^{(j_i)})
	\end{eqnarray}
	The parties estimate $\mathbf{Q}$ after comparing a small portion of their sifting key.
	\subsection{Minimization of QBER}
	In any QKD protocol, the QBER($\mathbf{Q}$) must be kept as low as possible. This is because an increased $\mathbf{Q}$ directly reflects loss of correlations between the trusted parties, which can arise from both channel noise and eavesdropping. As any eavesdropping attempt inevitably introduces errors, minimizing and finding threshold QBER is fundamentally linked to the security of the protocol.
	Clearly, minimizing $\mathbf{Q}$ in Eq.(\ref{cr4}) is equivalent to maximizing $\mathbf{H}$ given by:
	\begin{eqnarray}\label{cr5}
		\mathbf{H}&=&\sum_{\substack{j_i=1,2\\\forall i=1,2,3}}\Pi_{i=1}^3(1+\vec{U}_{i}^{(j_i)}\cdot T_i\vec{U}_{i}^{(j_i)})
	\end{eqnarray}
	It is clear from the expression of $\mathbf{H}$(Eq.(\ref{cr5})) that to maximize it, one needs to perform two levels of maximization: over all measurement directions $\vec{U}_i^{(j_i)}$ corresponding to two-dimensional MUBs and also over all state parameters(particularly correlation tensors $T_i$) corresponding to two-qubit states $\rho_i$($\forall i$). 
	\subsubsection{Maximizing $\mathbf{H}$ over MUBs} As discussed before, in first step of measurement stage in the protocol, $\forall i$$=$$1,2,3,$ $A_1$(in step 2(i)) and $A_{i+1}$(in step 2(a)) chooses randomly from a collection of two single-qubit MUBs($\mathcal{B}_i$). Now, in case of local dimension $d$$=$$2,$ up to global phase factor, there exist only $3$ possible MUBs\cite{mub1}:$\{\frac{|0\rangle\pm|1\rangle}{2}\},$  $\{\frac{|0\rangle\pm \imath\,|1\rangle}{2}\}$ and $\{|0\rangle,|1\rangle\}.$ Hence, maximization over $\vec{U}_i^{(1)},\vec{U}_i^{(2)}(i$$=$$1,2,3)$ gives:
	\begin{eqnarray}\label{cr6}
		\mathbf{H}&\leq& \sum_{\substack{j_i=1,2\\\forall i=1,2,3}}\Pi_{i=1}^3(1+t_{i,j_i}).
	\end{eqnarray}
	In Eq.(\ref{cr6}), $t_{i,1},t_{i,2}$ stand for the largest two ordered singular values($t_{i,1}$$\geq$$t_{i,2}$$\geq$$t_{i,3}$) of correlation tensor $T_i$ of $\rho_i(\forall i).$
	\subsubsection{Maximizing $\mathbf{H}$ over State Parameters}
	It is clear from above bound(Eq.(\ref{cr6})) that $\mathbf{H}$ next needs to be maximized with respect to $t_{1,1},t_{1,2},$$t_{2,1},t_{2,2}$$t_{3,1}$ and $t_{3,2}.$ Correlations obtained due to second step(2(ii) and 2(b)) in measurement stage are used to check violation of trilocal inequality(Eq.(\ref{ineqs})) in $3^{rd}$ step of the protocol. It becomes important to find out the minimum possible QBER in case no violation of trilocal inequality(Eq.(\ref{ineqs})) is observed in the protocol. Let $\mathbf{Q}_0$ denote the minimum value of $\mathbf{Q}$ and let it be referred to as \textit{critical value} of QBER in $\mathcal{N}_4.$ So, in absence of detectable non-trilocality among $A_1,A_2,A_3,A_4$, $\mathbf{Q}$ can never be made less than $\mathbf{Q}_0.$ However, the error rate($\mathbf{Q}$) can be reduced further once the trusted parties in the network($\mathcal{N}_4$) detect non-trilocal correlations.
	\paragraph{Identical States:} Let $\rho_1,\rho_2,\rho_3$ used in $\mathcal{N}_4$ be identical: $\rho_1$$=$$\rho_2$$=$$\rho_3$$=\varrho$(say). Here, Eq.(\ref{cr6}) gets simplified:
	\begin{eqnarray}\label{cr6i}
	\mathbf{H}&\leq&	\sum_{i=1}^2(1+t_i)^3+3(1+t_1)^2(1+t_2)\nonumber\\
		&&+3(1+t_2)^2(1+t_1)\nonumber\\
		&=&(2+t_1+t_2)^3,
	\end{eqnarray}
where $t_1,t_2$ denote the two largest singular values of correlation tensor of $\varrho.$\\
The criterion corresponding to no violation of Eq.(\ref{ineqs}) puts restriction over $t_{i,k}.$ Maximization of $\mathbf{H}$ over $t_{i,j}$ thus becomes a constrained maximization problem. Let $\mathbf{H}_0$ denote the maximum value of $\mathbf{H}$ resulting from such optimization. Using $\mathbf{H}_0$ in Eq.(\ref{cr5}), one gets the minimum value of QBER possible in the protocol. Theorem below provides $\mathbf{Q}_0$ for $\mathcal{N}_4$.
\begin{theorem}\label{theo1}
In the network based $4$-party QKD protocol $\mathcal{N}_4$, involving identical states, QBER generated cannot be less than $\mathbf{Q}_0$$=$$1-\frac{\sqrt{2}(1+2^{\frac{1}{6}})^3}{16}\approxeq$$0.154887$ when the trusted parties do not observe violation of trilocal inequality(Eq.(\ref{ineqs})).
\end{theorem}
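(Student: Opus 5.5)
The plan is to reduce the statement to a two-variable constrained optimisation, using the bound on $\mathbf{H}$ for identical states (Eq.(\ref{cr6i})) together with the trilocal violation criterion (Eqs.(\ref{boundstar}),(\ref{up211})). By Eq.(\ref{cr4}) we have $\mathbf{Q}=1-\mathbf{H}/2^6$. Eq.(\ref{cr6i}) already gives $\mathbf{H}\leq(2+t_1+t_2)^3$ after optimising over the MUBs, where $t_1\geq t_2\geq 0$ are the two largest singular values of the correlation tensor of $\varrho$. Hence it suffices to maximise $t_1+t_2$ over all states that do \emph{not} violate Eq.(\ref{ineqs}).

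First, I would specialise the constraint to identical states. With $t_{i,1}=t_1$ and $t_{i,2}=t_2$ for $i=1,2,3$, Eq.(\ref{boundstar}) becomes
\begin{equation*}
\mathcal{B}_{3-star}=\sqrt{(t_1^3)^{\frac{2}{3}}+(t_2^3)^{\frac{2}{3}}}=\sqrt{t_1^2+t_2^2}.
\end{equation*}
Since $\mathcal{B}_{3-star}$ is the optimal value of the trilocal expression under the protocol's measurement settings (GHZ measurement at $A_1$, arbitrary projective measurements at the edges), absence of violation means $t_1^2+t_2^2\leq 2^{\frac{2}{3}}$. Strictly, one must argue that the parties can choose $\vec{m}_i,\vec{n}_i$ to attain $\mathcal{B}_{3-star}$. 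Under that convention, no observed violation forces this inequality. I expect justifying this interpretation of ``no violation'' to be the main conceptual obstacle; the computation itself is routine.

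Next, by Cauchy--Schwarz, $t_1+t_2\leq\sqrt{2(t_1^2+t_2^2)}\leq \sqrt2\cdot 2^{\frac13}=2^{\frac56}$, with equality iff $t_1=t_2=2^{-\frac16}$. Since $(2+s)^3$ is increasing in $s$, we get
\begin{equation*}
\mathbf{H}_0=(2+2^{\frac56})^3=\bigl(2^{\frac56}(1+2^{\frac16})\bigr)^3=2^{\frac52}(1+2^{\frac16})^3 .
\end{equation*}
Therefore $\mathbf{Q}_0=1-\mathbf{H}_0/64=1-\frac{4\sqrt2(1+2^{\frac16})^3}{64}=1-\frac{\sqrt2(1+2^{\frac16})^3}{16}\approx0.154887$.

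Finally, I would check that the bound is attained, so that $\mathbf{Q}_0$ is genuinely the minimum. The value $2^{-\frac16}\approx0.891\leq1$ is an admissible singular value. A concrete state is the Bell-diagonal state with $T=\mathrm{diag}(c,-c,c)$ and $c=2^{-\frac16}$: its eigenvalues $\frac14(1-c\pm 2c)$ and $\frac14(1+c)$ must be nonnegative, which requires $c\leq1$, so this state is valid. It is also entangled, as the protocol requires. The MUB pair aligned with the two largest singular directions saturates Eq.(\ref{cr6}) with all three $\rho_i$ equal to this state, so equality holds throughout the chain of bounds.
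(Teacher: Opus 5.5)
Your argument is correct and uses the same reduction as the paper's Appendix~A. Both use $\mathbf{Q}=1-\mathbf{H}/2^6$ and the identical-state bound $\mathbf{H}\leq(2+t_1+t_2)^3$ of Eq.~(\ref{cr6i}). Both also read non-violation as $t_1^2+t_2^2\leq 2^{\frac{2}{3}}$; the paper adopts exactly the state-level convention you flag, without comment.

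The difference lies in how the two-variable problem is solved. The paper restricts to the circle $t_1^2+t_2^2=2^{\frac{2}{3}}$ and takes for granted that the maximum over the disc lies there. It then runs Lagrange multipliers and a bordered-Hessian test. In the paper the multiplier and the second derivatives are in fact miscomputed: $\mathcal{L}_{t_jt_j}$ should be $2\lambda=-2^{\frac{1}{6}}$, not $2^{\frac{7}{6}}$. The critical point $(2^{-\frac{1}{6}},2^{-\frac{1}{6}})$ is nevertheless right. Your Cauchy--Schwarz step $t_1+t_2\leq\sqrt{2(t_1^2+t_2^2)}\leq 2^{\frac{5}{6}}$ bounds the whole feasible region at once and gives the equality case directly. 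It needs no second-order analysis, so it is both shorter and more rigorous; the Lagrangian route only offers a generic template.

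You also go beyond the paper by checking attainment, but that check contains a slip. For $T=\mathrm{diag}(c,-c,c)$ with vanishing local Bloch vectors (a Werner state of visibility $c$), the spectrum is $\frac{1}{4}(1+3c)$ once and $\frac{1}{4}(1-c)$ three times. The values you list, $\frac{1}{4}(1-c\pm 2c)$ and $\frac{1}{4}(1+c)$, appear to be those of $\mathrm{diag}(c,c,c)$. They include $\frac{1}{4}(1-3c)$, which is negative at $c=2^{-\frac{1}{6}}$ and contradicts your conclusion. Only the correct spectrum yields the condition $c\leq 1$.

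Since all three singular values coincide, you should also specify that the two MUBs are the $\sigma_1$ and $\sigma_3$ eigenbases. Along $\sigma_2$ one gets $\vec{U}\cdot T\vec{U}=-c$, which would not saturate Eq.~(\ref{cr6}). Neither point affects the lower bound, which is all the theorem asserts.
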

\textbf{Proof:}See Appendix.A\\
Theorem.\ref{theo1} provides a threshold value of QBER generated in the protocol. This critical value is obtained under constraint that no violation of trilocal inequality is observed in Step.3. Considering such a restriction is justified as violation of trilocal inequality Eq.(\ref{ineqs})) is considered necessary for executing the protocol(illustrated further in next subsection). \\
Above theorem provides $\mathbf{Q}_0$ when all the states shared among the trusted parties are identical. However, considering distribution of qubits from non-identical states is also important from practical view point. For instance, let central party generate $3$ singlet states. But due to transmission through different noisy channels, the two-qubit state ultimately shared between $(A_1,A_{i+1})$ may differ for different $i.$ Next theorem provides critical value of $\mathbf{Q}$ for this type of scenarios where $\mathcal{N}_4$ involves non identical states. 
\begin{theorem}\label{theo11}
If all three states used in $\mathcal{N}_4$ are not identical then QBER generated cannot be less than $\mathbf{Q}_0$$=$$1-\frac{3+(2^{\frac{2}{3}}-1)^{\frac{3}{2}}}{4}$$ \approxeq$$ 0.13745$ when the trusted parties do not observe violation of trilocal inequality(Eq.(\ref{ineqs})).
\end{theorem}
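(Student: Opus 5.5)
The plan is to reduce the statement to a finite-dimensional constrained maximization and solve it by a convexity/vertex argument. First I factorize the bound of Eq.(\ref{cr6}). Since the sum runs independently over $j_1,j_2,j_3\in\{1,2\}$, it becomes a product of single-state sums:
\begin{equation*}
\mathbf{H}\leq\prod_{i=1}^{3}\left(2+t_{i,1}+t_{i,2}\right),\qquad \mathbf{Q}\geq 1-\prod_{i=1}^{3}\frac{2+a_i+b_i}{4},
\end{equation*}
where I write $a_i=t_{i,1}$ and $b_i=t_{i,2}$.

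The admissible region is given by the following conditions:
\begin{itemize}
\item $0\leq b_i\leq a_i\leq 1$, which holds for singular values of correlation tensors of two-qubit states;
\item absence of violation of Eq.(\ref{up211}), which through $\mathcal{B}_{3-star}$ of Eq.(\ref{boundstar}) reads $(\prod_i a_i)^{2/3}+(\prod_i b_i)^{2/3}\leq 2^{2/3}$.
\end{itemize}
The theorem then amounts to showing that the maximum of $F=\prod_i(2+a_i+b_i)/4$ over this region is $\frac{3+(2^{2/3}-1)^{3/2}}{4}$. It also requires that this value is attained by genuine states.

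For the maximization I pass to logarithmic coordinates $x_i=\ln a_i$ and $y_i=\ln b_i$, so that $y_i\leq x_i\leq 0$. The objective becomes $\sum_i g(x_i,y_i)$ with $g(x,y)=\ln(2+e^x+e^y)$. This $g$ is a log-sum-exp type function and hence convex. The constraint only involves $X=\sum_i x_i$ and $Y=\sum_i y_i$, through $e^{2X/3}+e^{2Y/3}\leq 2^{2/3}$.

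Fix $(X,Y)$. The set of $(x_i,y_i)$ with the prescribed sums and $y_i\leq x_i\leq 0$ is a polytope, so the convex objective is maximized at a vertex. I will enumerate these vertices. At a vertex, all but at most one or two of the inequalities $x_i\leq 0$ and $y_i\leq x_i$ are tight. Concretely, the deficits are concentrated: two states have $a=b=1$, or one state has $b=a$, and the remaining state carries the rest of $X$ and $Y$.

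Since $F$ is increasing in every variable, I may also take the constraint to be active, $e^{2X/3}+e^{2Y/3}=2^{2/3}$. This leaves a one-parameter family for each vertex type.

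The key comparison is then between a small number of candidates.
\begin{itemize}
\item Two states are perfectly correlated ($a=b=1$) and the third has $a_3=1$, $b_3=(2^{2/3}-1)^{3/2}$. This gives $F=\frac{3+(2^{2/3}-1)^{3/2}}{4}\approx 0.8625$.
\item The deficit is placed on the $a$'s as well, e.g. a single state with $a=b=2^{-1/2}$, giving $F=\frac{2+\sqrt2}{4}\approx 0.854$.
\item The deficit is split between two states, with one of them having $b=a$.
\end{itemize}
For each vertex type I must maximize a one-variable function of a parameter $s$ on the constraint curve and show it never exceeds the first value. I expect this step to be the main obstacle: the curve is non-linear, and the mixed vertices need care to exclude interior maxima. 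I would first try to show that each one-variable function is monotone in $s$ by differentiation. If that is messy, I would fall back on bounding $(2+a+b)$ using $a+b\leq a+\left(2^{2/3}-a^{2/3}\right)^{3/2}$ and checking that this is increasing in $a$ on $[2^{-1/2},1]$.

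Finally I show that the bound is attained. Singlet-type states have singular values $(1,1,1)$, so they give $a=b=1$. For the third state I take a mixture of two Bell states, which has correlation tensor $\mathrm{diag}(1,t,-t)$ with $t=(2^{2/3}-1)^{3/2}$. This is a valid two-qubit state, so it realizes $a_3=1$ and $b_3=t$.

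With these three states, all non-identical, $\mathcal{B}_{3-star}=2^{1/3}$ exactly, so there is no violation of Eq.(\ref{ineqs}). The MUBs chosen along the top two singular directions saturate Eq.(\ref{cr6}). Hence $\mathbf{Q}_0=1-\frac{3+(2^{2/3}-1)^{3/2}}{4}\approx 0.13745$.
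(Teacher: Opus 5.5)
Your route is genuinely different from the paper's, and it proves more. Appendix B sets every $t_{i,j}=1$ except $t_{3,2}$ ``without loss of generality'' and then maximizes $t_{3,2}$ on the constraint. That only exhibits a feasible configuration with $\mathbf{H}=16(3+(2^{2/3}-1)^{3/2})$. It never shows that no other configuration does better, which is exactly the direction the theorem asserts. Your three steps supply that missing optimality argument:
\begin{itemize}
\item the factorization $\mathbf{H}\le\prod_i(2+a_i+b_i)$;
\item logarithmic coordinates, in which the objective is a sum of convex functions and the constraint depends only on $(X,Y)$;
\item the reduction to vertices of the fibre polytope.
\end{itemize}
Your vertex count (at most two of the six inequalities slack) is correct. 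Generically it leaves only two types, with $A=\prod_i a_i$ and $B=\prod_i b_i$: the concentrated vertex $\{(1,1),(1,1),(A,B)\}$ and the mixed vertex $\{(1,1),(1,B/A),(A,A)\}$.

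The step you flag as the obstacle does not go through as planned. On the mixed vertex the value $F_{\mathrm{II}}=(3+B/A)(1+A)/8$ is not monotone along the constraint curve. It is about $0.854$ at $A=2^{-1/2}$, drops to about $0.849$ near $A=0.8$, and reaches $0.8625$ only at $A=1$. So differentiation cannot give monotonicity, and your fallback bound on $a+b$ only treats the concentrated vertex. The fix is a one-line comparison. With $F_{\mathrm{I}}=(2+A+B)/4$ one has $F_{\mathrm{I}}-F_{\mathrm{II}}=(1-A)(1-B/A)/8\ge 0$, so only the concentrated vertex matters. There, put $u=A^{2/3}$ and $v=B^{2/3}$. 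Maximizing the convex function $u^{3/2}+v^{3/2}$ over $\{u+v\le 2^{2/3},\,0\le v\le u\le 1\}$ picks the vertex $(1,2^{2/3}-1)$, which is the claimed optimum.

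You could also bypass the log-convexity machinery entirely. On $0\le b_k\le a_k\le 1$ one has $(2+a_1+b_1)(2+a_2+b_2)\le 4(2+a_1a_2+b_1b_2)$. The difference is affine in each pair $(a_k,b_k)$, so checking the nine pairs of triangle vertices suffices. Since $(a_1a_2,b_1b_2)$ again lies in the same triangle, iterating gives $\prod_i(2+a_i+b_i)\le 16(2+A+B)$ directly.

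Two smaller points:
\begin{itemize}
\item The logarithmic coordinates exclude $b_i=0$, which needs separate treatment; there $F\le 3/4$.
\item For attainment, the singlet has $T=-\mathbb{I}$, so identical-basis measurements never give $\vec U\cdot T\vec U=+1$. Use instead, e.g., $|\Phi^+\rangle$ and $|\Psi^+\rangle$. This also makes the three states pairwise distinct, unlike two singlets.
\end{itemize}
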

\textbf{Proof:}See Appendix.B.\\
Comparison of the threshold values provided by the two theorems points out that when all the states are not identical then $\mathbf{Q}$ can be reduced more($13.7\%$ approx) than that possible when all the states are identical($15.5\%$ approx). 
$\mathbf{Q}_0$ will next be used to frame a criterion to test validity of the protocol in the sifting stage. 
\subsection{Necessary Security Criteria In $\mathcal{N}_4$}
It is clear from the steps of $\mathcal{N}_4,$ two criteria are used to check whether $\mathcal{N}_4$ can be used to generate a secure key among the trusted parties. Particularly, security check is done in two steps:
\begin{itemize}
\item \textit{First Check:} In $3^{rd}$ step using trilocal inequality(Eq.(\ref{ineqs}))
\item \textit{Second Check:} In $4^{th}$ step using $\mathbf{Q}_0$ provided by Theorems.\ref{theo1},\ref{theo11}.
\end{itemize}
\subsubsection{First Security Check}
This check relies upon exploiting the fact that violation of Eq.(\ref{ineqs}) ensures existence of genuine form of network non trilocality among $A_1,A_2,A_3,A_4.$ Such correlations thus cannot be obtained even if only one party does not share any correlation with the other three\cite{bilo6}. Detection of non-trilocality by violation of Eq.(\ref{ineqs}) thus acts as a security check similar to that provided by testing Bell-type inequality in any entanglement-assisted QKD protocol. If each of $\rho_1,\rho_2,\rho_3$ is pure entangled state, then Eq.(\ref{ineqs}) is supposed to be violated\cite{bilo6} if the four partite correlations shared among the trusted parties are untampered. In case of an eavesdropper's interference, these correlations are disturbed and the inequality may no longer be violated. Even if mixed entangled states are distributed, the parties can expect to observe a violation depending on entanglement content of the states. So, no violation of Eq.(\ref{ineqs}) may be considered as an indicator of untrusted party's presence. $\mathcal{N}_4$ is thus aborted if the parties do not observe any such violation.\\
Now, as discussed in sec.\ref{pre}, Eq.(\ref{ineqs}) is violated if $\rho_1,\rho_2,\rho_3$ satisfy:
	\begin{eqnarray}\label{cr7}
		\Pi_{i=1}^3 (t_{i,1})^\frac{2}{3}+\Pi_{i=1}^3 (t_{i,2})^\frac{2}{3}>2^{\frac{2}{3}}
	\end{eqnarray}
	Eq.(\ref{cr7}) thus acts as a criterion($\mathcal{C}_{\mathcal{N},1},$say) to check validity of $\mathcal{N}_4.$\\
	$\mathcal{C}_{\mathcal{N},1}$:\textit{if Eq.(\ref{cr7}) is satisfied in $3^{rd}$ step, then next step of the protocol is executed. Otherwise it is aborted.} 
	\subsubsection{Second Security Check}
	QBER($\mathbf{Q}$) quantifies the disturbance introduced in any QKD protocol. As stated in above theorems, as long as there is no violation of trilocal inequality in $\mathcal{N}_4$, QBER can never be less than $\mathbf{Q}_0.$ Now sifting step in $\mathcal{N}_4$ is executed only after violation of trilocal inequality is observed in $3^{rd}$ step. When such violation is observed, QBER can be less than the critical value($\mathbf{Q}_0$). So, the protocol must be aborted in the sifting step whenever $\mathbf{Q}$ exceeds this threshold. Precisely second security criterion is given by:
	\begin{eqnarray}\label{dummy}
		\mathbf{Q}&<&\mathbf{Q}_0.
	\end{eqnarray}
	If above criterion(Eq.(\ref{dummy})) is not satisfied, i.e., if error rate is greater than the critical error rate then the correlations between the trusted parties may become too weak to guarantee secrecy in the protocol. \\
	In sifting step of $\mathcal{N}_4,$ minimizing over all possible collection($\mathcal{B}_1,$$\mathcal{B}_2,$$\mathcal{B}_3$) of two MUBs, $\mathbf{Q}$ is given by Eq.(\ref{cr6}):
	\begin{eqnarray}\label{cr8}
		\mathbf{Q}^{'}=\mathbf{Q}]_{\substack{\textmd{\tiny{Min }}\\\textmd{\tiny{over MUBs}}}}&=&1-\frac{1}{2^6}\sum_{\substack{j_i=1,2\\\forall i=1,2,3}}\Pi_{i=1}^3(1+t_{i,j_i}).
	\end{eqnarray}
	$\mathbf{Q}_0,$ provided by Theorem.\ref{theo1}, results after second level of minimization of $\mathbf{Q}.$ Hence, 	$\mathbf{Q}_0$ is obtained by  minimizing $\mathbf{Q}^{'}$ over all state parameters assuming no violation of trilocal inequality. Now, when the trilocal inequality is violated then $\mathbf{Q}^{'}$ can be lesser than $\mathbf{Q}_0$ and this is considered as second security criterion(Eq.(\ref{dummy})). \\
In general, the states($\rho_i$) shared between each of the three pairs of central and extreme parties($(A_1,A_{i+1})$) are supposed to be non-identical. Critical error rate provided by Theorem.\ref{theo11} is to be used to frame the security criterion. Using $\mathbf{Q}_0$(Theorem.\ref{theo11}) and $\mathbf{Q}^{'}$(Eq.(\ref{cr8})) in Eq.(\ref{dummy}), one gets:
	\begin{eqnarray}\label{cr9}
		1-\frac{1}{2^6}\sum_{\substack{j_i=1,2\\\forall i=1,2,3}}\Pi_{i=1}^3(1+t_{i,j_i})&<&1-\frac{16(3+(2^{\frac{2}{3}}-1)^{\frac{3}{2}})}{2^6}\nonumber\\
	\Rightarrow	\sum_{\substack{j_i=1,2\\\forall i=1,2,3}}\Pi_{i=1}^3(1+t_{i,j_i})&>& 16(3+(2^{\frac{2}{3}}-1)^{\frac{3}{2}}).
	\end{eqnarray}
Above relation explicitly gives the second criterion($\mathcal{C}_{\mathcal{N},2},$ say) to check security.\\
$\mathcal{C}_{\mathcal{N},2}$:\textit{if Eq.(\ref{cr9}) is satisfied then $\mathcal{N}_4$ is used to generate secure key. Otherwise it is aborted.}\\
In case, the trusted parties do share identical states they use $\mathbf{Q}_0$ provided by Theorem.\ref{theo1} instead of that provided by Theorem.\ref{theo11}. Using $\mathbf{H}$ provided by Eq.(\ref{cr6i}), the expression of $\mathbf{Q}^{'}$(Eq.(\ref{cr8})) in this case gets simplified:
	\begin{eqnarray}\label{cr8i}
	\mathbf{Q}^{'}=\mathbf{Q}]_{\substack{\textmd{\tiny{Min }}\\\textmd{\tiny{over MUBs}}}}&=&1-\frac{1}{2^6}(2+t_1+t_2)^3.
\end{eqnarray}
For above expression of $\mathbf{Q}^{'}$(Eq.(\ref{cr8i})) and critical error rate provided by Theorem.\ref{theo1}, one gets:
\begin{eqnarray}\label{cr9i}
1-\frac{1}{2^6}(2+t_1+t_2)^3&>&1-\frac{4\sqrt{2}(1+2^{\frac{1}{6}})}{2^6}\nonumber\\
\Rightarrow t_1+t_2&>&2^{\frac{5}{6}}.
\end{eqnarray}
Here, the second security criterion($\mathcal{C}_{\mathcal{N},2}^{'},$say) thus takes the same form as $\mathcal{C}_{\mathcal{N},2}$ with only Eq.(\ref{cr9}) now replaced by Eq.(\ref{cr9i}).\\
When the trusted parties have prior information that they are supposed to share identical states($\varrho,$) then they may use $\mathcal{C}_{\mathcal{N},2}^{'}$ to check whether $\varrho$ can be used to execute the protocol successfully or not. For instance, let $A_1$ generate and distribute qubits of identical states $\rho_1$$=$$\rho_2$$=$$\rho_3$$=$$\varrho$ among $A_2,A_3,A_4.$ Let the parties know that the qubits are communicated($A_1$ to $A_{i+1}$) through identical noisy channel($\Lambda$,say). Let $\rho_i^{'}$$=$$\Lambda(\rho_i).$ Here, in absence of any untrusted party, state shared among $A_1$ and $A_{i+1}(\forall i$$=$$1,2,3)$ is supposed to be the same $\rho_1^{'}$$=$$\rho_2^{'}$$=$$\rho_3^{'}$$=$$\varrho^{'}.$ To check whether $\varrho^{'}$ can be used to execute $\mathcal{N}_4$ the parties will use $\mathcal{C}_{\mathcal{N},2}^{'}.$ However, if the parties use $\mathcal{C}_{\mathcal{N},2}$ instead of $\mathcal{C}_{\mathcal{N},2}^{'}$ then $\varrho^{'}$ may fail to succeed second security check. Consequently the parties abort the protocol even though no malicious party is present. Choice of second security criterion thus plays an important role for segregating utility of two qubit states in $\mathcal{N}_4.$ This idea will be further illustrated during characterization of two-qubit states in next section. 
\par Now both the security constraints $\mathcal{C}_{\mathcal{N},1}$ and $\mathcal{C}_{\mathcal{N},2}$(or $\mathcal{C}_{\mathcal{N},2}^{'}$) are necessary as both of these have to be satisfied in order to use $\mathcal{N}_4.$ But these are not sufficient to ensure unconditional security as even if both of these hold, Eve can still interfere without getting detected. However, these criteria act as necessary operational indicators of non-classical network correlations and acceptable noise levels. So, these act as practical first-level security witnesses that $\mathcal{N}_4$ must satisfy before applying more refined security strategies. Moreover these criteria will aid in analyzing features of quantum states that can be used for successful execution of the QKD protocol. Present work focuses in the latter direction. 

	\section{Characterizing Quantum States Used In $\mathcal{N}_4$}\label{states}
	Ideally, $\rho_1,\rho_2,\rho_3$ used in $\mathcal{N}_4$ are supposed to be pure entangled states. However, owing to interaction with noisy environment, mixed entangled states get distributed among the parties even in absence of any malicious party. In context of practical situation it thus becomes pertinent to characterize the two-qubit states that can be used in $\mathcal{N}_4$ to generate secure key. \\
	The security restrictions provided by Eqs.(\ref{cr7},\ref{cr9}) both involve state parameters only. Consequently, these criteria act as tool to detect utility of arbitrary two qubit states for designing $\mathcal{N}_4.$ Precisely, let $\rho_1,\rho_2,\rho_3$ be three arbitrary two-qubit states shared among the trusted parties $(A_1,A_2),(A_1,A_3),(A_1,A_4)$ respectively in $\mathcal{N}_4.$ If the correlation tensors($T_i)$ of $\rho_i$ satisfy both Eq.(\ref{cr7}) and Eq.(\ref{cr9}) then the protocol($\mathcal{N}_4$) is executed successfully. \\
	Usually state shared between $A_1$ and $A_{i+1}$ is different from that shared between $A_1$ and $A_{j+1}$($\forall i$$\neq$$j$). In case these states are identical then a complete characterization of two-qubit state space results from above procedure. 
	\subsection{Identical $\rho_1,\rho_2,\rho_3$}
	Let $\rho_i$$=$$\varrho$ be the state shared among $A_1$ and $A_{i+1}(\forall i).$ Let $T$$=$$\textmd{diag}(t_1,t_2,t_3)$ be the correlation tensor of $\varrho$ with $t_1$$\geq$$t_2$$\geq$$t_3$ denoting ordered singular values of $T$. In terms of $T,$ first  criterion takes the following form:
	\begin{eqnarray}\label{cr10}
		t_1^2+t_2^2>& 2^{\frac{2}{3}}\,\,\,\textbf{\textmd{\small{First criterion}}}
	\end{eqnarray}
	and second security criterion is given by Eq.(\ref{cr9i}).\\
When Eq.(\ref{cr10}) is violated then corresponding state $\varrho$ fails to generate detectable full network non-trilocal correlations\cite{bilo6}. For remaining discussion any two-qubit state($\varrho$) that violates Eq.(\ref{cr10}), hence fails to violate the trilocal inequality(Eq.(\ref{ineqs})) will be referred to as \textit{trilocal state.} Otherwise it will be a \textit{fully network non-trilocal state} or simply \textit{fully network nonlocal(FNN) state.}  \\ 
It is evident from previous discussion that analyzing the two-dimensional space corresponding to the largest two singular values $(t_1,t_2)$ of $T$ suffices for the characterization. For simplicity, further discussion with $t_1,t_2$ does not consider the ordering $t_1$$>$$t_2$ and thereby maintain a symmetry between them.   \\
$\varrho$ is an arbitrary two-qubit state. Hence, $t_{1},t_{2}$$\in$$[0,1].$ Let $\mathbf{S}$ denote a unit square:
	\begin{equation}\label{basis8i}
		\mathbf{S}=\{(t_{1},t_{2}):0\leq t_{1},t_{2}\leq1\}
	\end{equation}
	Eq.(\ref{cr10}) represents a circle($C$,say) with center at the origin. Eq.(\ref{cr9i}) denotes a tangent line($\mathbf{T},$say) to $\mathbf{C}$(see Fig.\ref{fig2}) at point $(2^{-\frac{1}{6}},2^{-\frac{1}{6}}).$ \\
	Any density matrix corresponding to a point lying outside $\mathbf{S} $(Eq.(\ref{basis8i})) does not represent any two-qubit state(see Fig.\ref{fig2}). 
	For rest of the analysis, let $\varrho_{P}$ denote a two-qubit state corresponding to any point $P$ inside $\mathbf{S}.$ A portion of positive quadrant($\mathbf{C}_+$) only of $\mathbf{C}$ lies inside $\mathbf{S}.$ Trilocal states reside on and inside $\mathbf{C}_{+}.$ So any point $P$ lying inside $\mathbf{S}$ but outside $\mathbf{C}_{+}$ corresponds to a fully network nonlocal(FNN) state(see Fig.\ref{fig2}). Consequently when such $\varrho_P$ is used in $\mathcal{N}_4$ then the protocol passes in first security check. However, the protocol may still fail in second security check in case $P$ lies below the tangent line $\mathbf{T}$ as estimated $\mathbf{Q}$ exceeds or is at most equal to the critical error rate($\mathbf{Q}_0$). In case any state $\varrho_P$ can be used to successfully execute $\mathcal{N}_4,$ corresponding point $P$ in $\mathbf{S}$ must lie above $\mathbf{T}.$
\begin{center}
\begin{figure}
		\begin{tabular}{c}
		\subfloat[(i)]{\includegraphics[trim = 0mm 0mm 0mm 0mm,clip,scale=0.39]{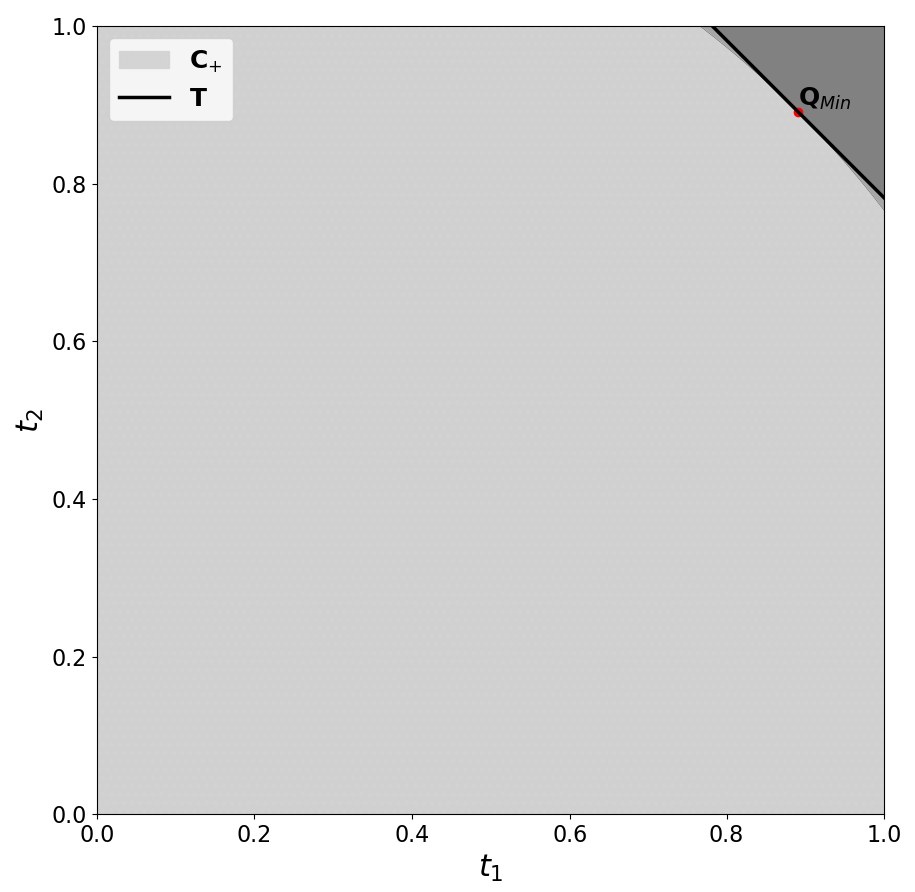}}\\
		\subfloat[(ii)]{\includegraphics[trim = 0mm 0mm 0mm 0mm,clip,scale=0.39]{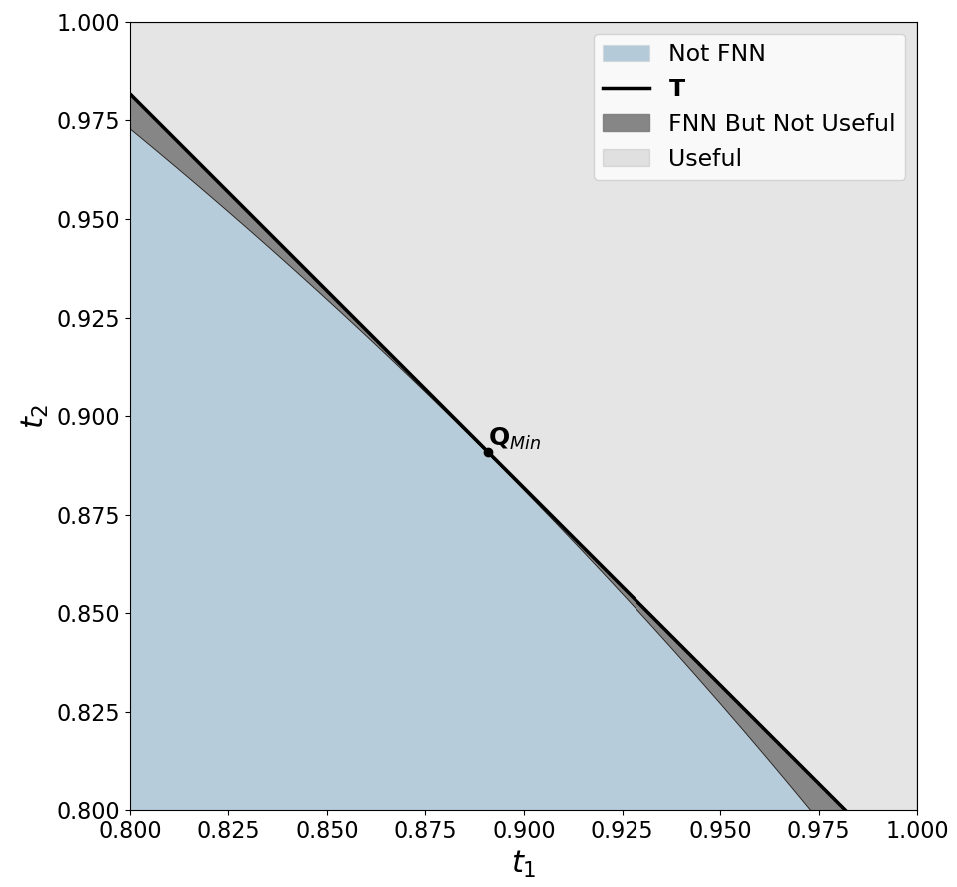}}\\
	\end{tabular}
\caption{\emph{In sub-figure.(i) the entire region square($\mathbf{S}$) represents the possible subspace formed by largest two singular values($t_1,t_2$) of correlation tensor($T$) of an arbitrary two-qubit state. An enlarged view of the upper right corner of the graph in sub-figure.(i) is provided in sub-figure.(ii). Corresponding to any point $P$ lying inside part of the positive quadrant $\mathbf{C}_{+}$ of the circle $\mathbf{C}$(Eq.(\ref{cr10})), $\varrho_P$ is an useless state($\mathcal{N}_4$ fails first security check). Again state corresponding to any point $P$ lying outside $\mathbf{C}_{+}$ but below tangent line $\mathbf{T}$ in $\mathbf{S},$ is an useless state($\mathcal{N}_4$ fails second security check). Only for any point $P$ lying above $\mathbf{T}$ in $\mathbf{S},$ $\varrho_P$ can be used to execute $\mathcal{N}_4$ successfully. Point $\mathbf{Q}_{Min}$ represents the minima($2^{-\frac{1}{6}}$,$2^{-\frac{1}{6}}$) of QBER($\mathbf{Q}$) under assumption of no violation. Clearly the region outside $\mathbf{C}_+$ and below $\mathbf{T}$ give $(t_1,t_2)$ for which corresponding sate is FNN but not useful in $\mathcal{N}_4.$}}
\label{fig2}
\end{figure}
\end{center}
\subsection{Illustration With Non-identical $\rho_1,\rho_2,\rho_3$}
In most general case, utility of any 3 two-qubit states $\rho_1,\rho_2,\rho_3$ in $\mathcal{N}_4$ depends on the correlation tensors of these states. For instance, consider the following $T_i:$
\begin{eqnarray}\label{ext1}
	T_1& =&\textmd{diag}(0.95,t_{1,2},t_{1,3})\nonumber\\
	T_2& =&\textmd{diag}(0.95,t_{1,2},t_{1,3})\nonumber\\
	T_3& =&\textmd{diag}(0.96,t_{1,2},t_{1,3})\nonumber\\
	&&
\end{eqnarray}	
Corresponding $\rho_1,\rho_2,\rho_3$ are useful in $\mathcal{N}_4$ if $t_{1,2},t_{2,2},t_{3,2}$ satisfy both the security criteria(Eqs.(\ref{cr7},\ref{cr9})):
	\begin{eqnarray}\label{cr12}
		t_ {1, 2} t_ {2, 2} t_ {3, 2}&>&0.558986
	\end{eqnarray}
\begin{eqnarray}\label{cr13}
 8.7025 t_{3,2} + t_{2,2}(8.732 + 2.95 t_{3,2}) + &&\nonumber\\
t_{1,2} (8.732 + 2.95 t_{3,2} + t_{2,2} (2.96 + t_{3,2}))	&>&29.4437.
\end{eqnarray}
Clearly, there exist state parameters such that the states can be used to successfully run the protocol(see Fig.\ref{fig3}). 
For a particular instance, let $T_1$$ =$$\textmd{diag}(0.95,0.91,0.9),$ $T_2$$ =$$\textmd{diag}(0.95,0.88,0.85)$ and $T_3$$ =$$\textmd{diag}(0.96,0.85,0.82).$ L.H.S. of Eq.(\ref{cr12}) and Eq.(\ref{cr13}) turn out to be $0.09522$ and $30.5669.$ respectively. Both the security criteria are satisfied. Hence, these specific states can be used to design $\mathcal{N}_4.$
\begin{center}
	\begin{figure}
		\includegraphics[width=3.6in]{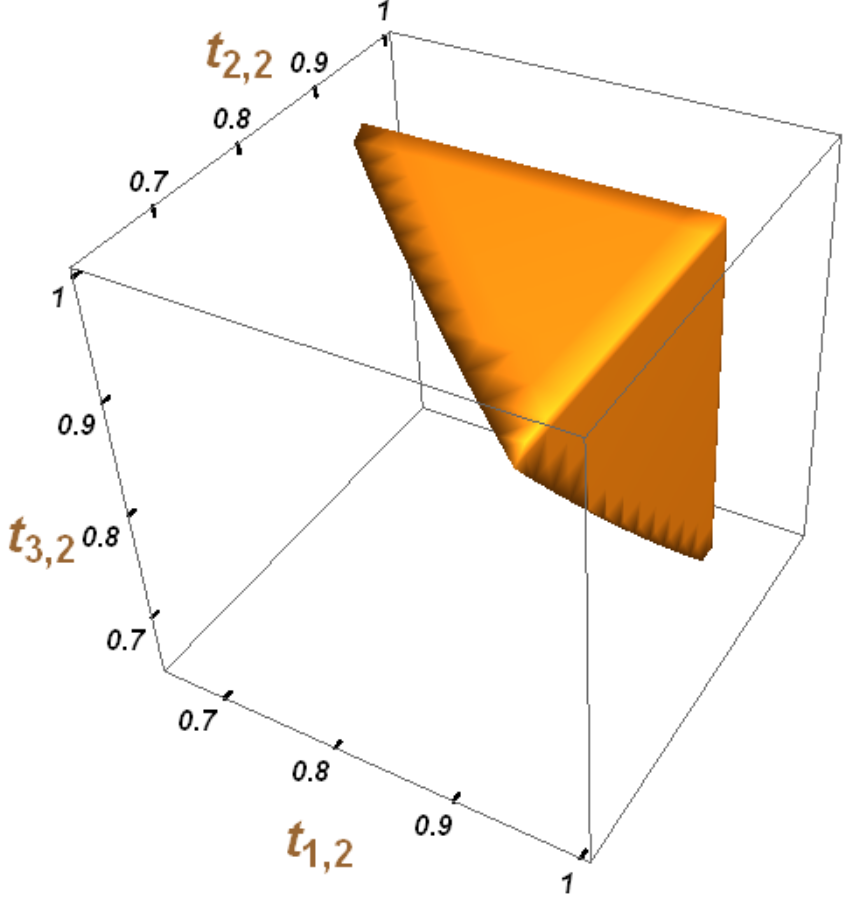} 
		\caption{\emph{Shaded region forms a part of three-dimensional space formed by $2^{nd}$ largest singular value $t_{i,2}$ of correlation tensor $T_i$ of $\rho_i(i$$=$$1,2,3)$ specified by Eq. (\ref{ext1}). $\rho_1,\rho_2,\rho_3$ corresponding to any point in the shaded region can be used for running $\mathcal{N}_4.$}}
		\label{fig3}
	\end{figure}
\end{center}
\subsection{Not All FNN States Are Useful}
For $\rho_1,\rho_2,\rho_3$ to be useful in the protocol both the security criteria $\mathcal{C}_{\mathcal{N}_1}$ and $\mathcal{C}_{\mathcal{N}_2}$(or $\mathcal{C}_{\mathcal{N}_2^{'}}$) need to be satisfied. When $\rho_1$$=$$\rho_2$$=$$\rho_3$$=$$\varrho$(identical), the state must satisfy Eq.(\ref{cr10}) and Eq.(\ref{cr9i}) simultaneously. State corresponding to any point $\mathbf{P}$ lying below the tangent line($\mathbf{T}$) but outside the circle's positive quadrant($\mathbf{C}_+$) is $FNN$ but does not satisfy Eq.(\ref{cr9i}). Consequently, when any such state is used, $\mathcal{N}_4$ will be aborted in the sifting stage(see sub-fig.(ii) in Fig.\ref{fig2}). \\
In case $\rho_1,\rho_2,\rho_3$ are non identical, the state parameters must satisfy Eq.(\ref{cr7}) and Eq.(\ref{cr9}). However, there exist states abiding by Eq.(\ref{cr7}) but violate Eq.(\ref{cr9}). Such states generate fully network nonlocal correlations but still cannot be used to execute $\mathcal{N}_4$(see sub-fig.(ii) of Fig.\ref{fig2} and also Fig. \ref{exfig}). Violation of trilocal inequality thus acts as a necessary condition but does not suffice to guarantee successful run of $\mathcal{N}_4.$ 
\begin{center}
	\begin{figure}
		\includegraphics[width=3.6in]{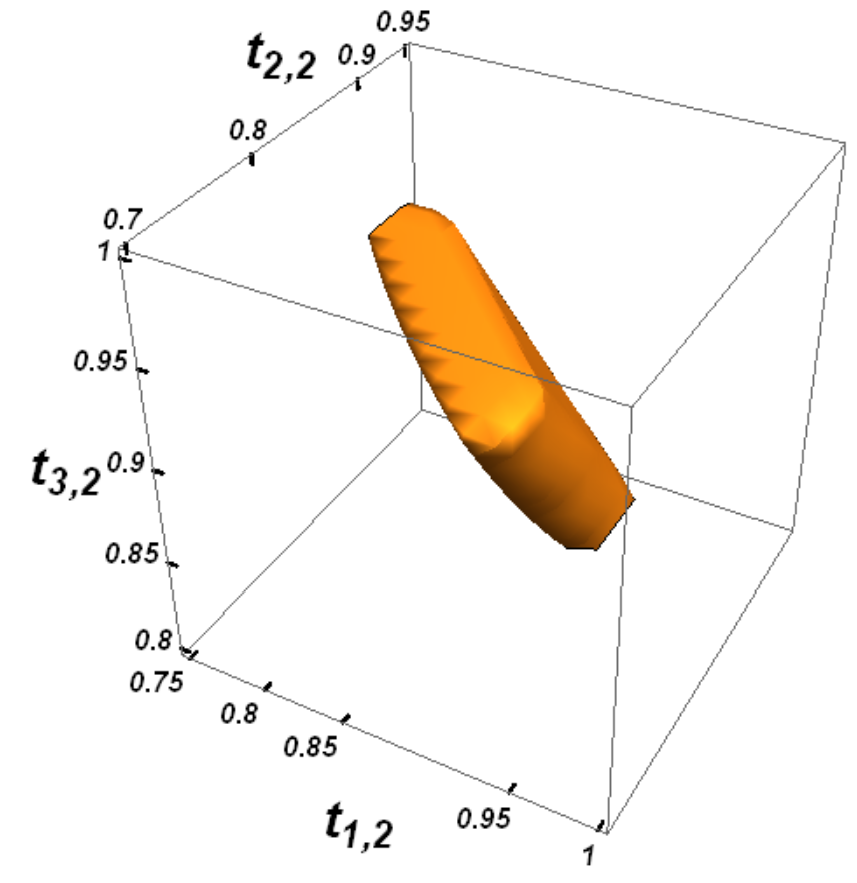} 
		\caption{\emph{Shaded region is a subspace formed by $2^{nd}$ largest singular values of $\rho_1,\rho_2,\rho_3$ specified by $(t_{1,1},t_{2,1},t_{3,1})$$=$$(0.91,0.94,0.93).$ Any such states generate detectable full network nonlocality but cannot be used fo running $\mathcal{N}_4.$}}
		\label{exfig}
	\end{figure}
\end{center}
\subsection{Mis-classification of Useful States}
From previous discussions, it is clear that the trusted parties can use either $\mathcal{C}_{\mathcal{N},2}$ or $\mathcal{C}_{\mathcal{N},2}^{'}$ to check validity of the protocol in the sifting stage. Choice in between these two alternatives is crucial in characterizing two-qubit states in context of their use in designing QKD protocol. \\
Consider an ideal QKD scenario where only trusted parties are present and $\rho_i$$=$$\varrho,\forall i$. So, after maximization over all measurement settings, $\mathbf{H}$(Eq.(\ref{cr6})) is given by:
	\begin{eqnarray}
		\mathbf{H}_{\textmd{\tiny{Max over all MUBs}}}&=&(t_1+t_2+2)^3.
	\end{eqnarray}
	 However, let the parties do not know that all the states involved in the protocol are identical. Consequently, they use $\mathcal{C}_{\mathcal{N},2}$ as second security criterion. Let $\varrho$ be such that it satisfies: 
		\begin{eqnarray}\label{pecu1}
		t_1^2+t_2^2&>& 2^{\frac{2}{3}}
	\end{eqnarray}
	\begin{eqnarray}\label{pecu2}
		t_1+t_2&\leq&2^{\frac{4}{3}}(3+(2^{\frac{2}{3}}-1)^{\frac{3}{2}})^
		{\frac{1}{3}}-2
	\end{eqnarray}
	and
	\begin{eqnarray}\label{pecu3}
	t_1+t_2&>&2^{\frac{5}{6}} 
	\end{eqnarray}
	As $\varrho$ satisfies Eq.(\ref{pecu1}), $\mathcal{N}_4$ passes the first security check. As Eq.(\ref{pecu2}) holds and the parties are using $\mathcal{C}_{\mathcal{N},2}$ as second security criterion, so $\mathcal{N}_4$ cannot pass the second security check. The protocol is thus ultimately aborted indicating $\varrho$ to be useless for designing QKD protocol. However $\varrho$ satisfies Eq.(\ref{pecu3}). So, if $\mathcal{C}_{\mathcal{N},2}^{'}$ was used then the protocol would pass the second security check also and thus run successfully. So here $\varrho$ gets mis-classified as not fit for designing $\mathcal{N}_4$(see Fig.\ref{fig4}). To this end it may be noted that for designing any QKD protocol ensuring security of the protocol must be given the top priority. From that perspective using more stringent security criterion is acceptable even if some noisy entangled states get discarded unnecessarily. 
\begin{center}
	\begin{figure}
		\includegraphics[width=3.6in]{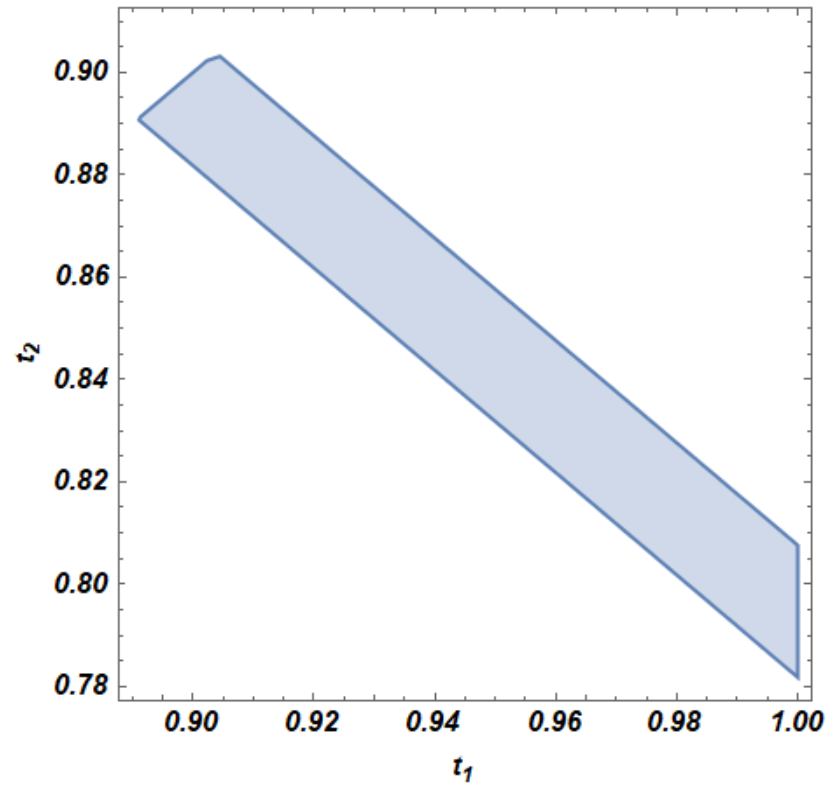} 
		\caption{\emph{The figure provides a subspace in two-dimensional space formed by largest two singular values of correlation tensor of arbitrary two-qubit state $\varrho.$ For any point $(t_1,t_2)$ lying in the shaded region, corresponding noisy entangled state satisfies both $\mathcal{C}_{\mathcal{N},1}$ and $\mathcal{C}_{\mathcal{N},2^{'}}$ but violates $\mathcal{C}_{\mathcal{N},2}.$ Consequently, any such state will be rejected as useless state if one uses $\mathcal{C}_{\mathcal{N},2}$ as a second security criterion in $\mathcal{N}_4.$}}
		\label{fig4}
	\end{figure}
\end{center}

\section{Comparing $\mathcal{N}_4$ With Protocol Relying on Bell-CHSH Violation}\label{cpmp}
From discussions above it is clear that $\mathcal{N}_4$ is an entanglement assisted QKD protocol that relies on detection of genuine form of four-partite network correlations. In this context it becomes imperative to explore the network based protocol when its security relies upon detection of bipartite nonlocal correlations in each of the pairs $(A_1,A_2),$ $(A_1,A_3)$ and $(A_1,A_4)$ sharing a two-qubit state. For further discussion $\mathcal{N}_4$ needs to be modified so that it now relies upon violation of Bell-CHSH inequality instead of violation of trilocal inequality(Eq.(\ref{ineqs})).
\subsection{Modifying $\mathcal{N}_4$}\label{chshv}
Let $\mathbf{N}_4$ denote the modified version of network based QKD protocol $\mathcal{N}_4.$ In $\mathbf{N}_4$ four legitimate parties $A_1,A_2,A_3,A_4$ are involved in same pattern as that in $\mathcal{N}_4.$ Analogous to $\mathcal{N}_4$ this protocol will finally generate a block-structured secure key with block length $3$ that will be shared between $A_1,A_2,A_3,A_4.$ Almost all the steps of $\mathbf{N}_4$ are same as that of $\mathcal{N}_4.$ Detailing of the modifications made is provided below. 
	\subsubsection{Steps Of $\mathbf{N}_4$}
	The steps of $\mathcal{N}_4$ are listed along with detailing of the modified steps.
	\begin{enumerate}
		\item \textit{Qubits Preparation and Distribution Stage:} Same as that in $\mathcal{N}_4.$
		\item \textit{Measurement Stage:}  Same as that in $\mathcal{N}_4$ except that of the measurement by the central party.
		$A_1$ executes following steps:
\begin{enumerate}
\item[(i)]$\forall i$$=$$1,2,3,$ $A_1$ measures $n_1($$<$$n)$ copies of single qubit of $\rho_i,$ in one of two randomly chosen two-dimensional MUBs $B_{i}^{(1)},B_{i}^{(2)}.$ $\forall i=1,2,3,$ let $\mathcal{B}_{i}$$=$$\{B_{i}^{(k)}\}_{k=1}^2$ denote the collection of single-qubit MUBs used by $A_1.$
\item [(ii)]$\forall i$$=$$1,2,3,$ for each of remaining $n-n_1$ copies of $\rho_i$, $A_1$ performs single-qubit projective measurements randomly in any one of two arbitrary directions $\vec{b}_{i,1}.\vec{\sigma},\vec{b}_{i,2}.\vec{\sigma}$ on the single qubit of $\rho_i.$
\end{enumerate}
Each of $A_2,A_3,A_4$ executes the same measurement steps as in $\mathcal{N}_4.$
\item \textit{Bell-CHSH Inequality Testing Stage:} $\forall i$$=$$1,2,3,$ each pair of central and edge parties $(A_1,A_{i+1})$ communicate among themselves their outputs resulting in second step of measurement stage(steps 2(ii) and 2(b)). Bipartite correlations $P(a_1,a_{i+1}|x_1,x_{i+1})$ in each of the pair of the parties $(A_1,A_{i+1})(i$$=$$2,3,4)$ are collected to test Bell-CHSH inequality. Here, $\forall i=1,2,3,4$ $(x_i,a_i)$ denotes the input-output pair of $i^{th}$ trusted party $A_i.$\\
If violation of Bell-CHSH inequality is observed for each of the three pairs $(A_1,A_2),$ $(A_1,A_3)$ and $(A_1,A_4)$ then the next step of the protocol is executed. Otherwise, the protocol is aborted.
\item \textit{Sifting Stage:}Same as that in $\mathcal{N}_4.$
\item \textit{Generation of Secret Key:} Same as that in $\mathcal{N}_4.$
\end{enumerate}
\subsection{Minimizing QBER In $\mathbf{N}_4$}
Sifting stage of $\mathbf{N}_4$ remaining same as in $\mathcal{N}_4,$ QBER generated in the protocol is given by Eq.(\ref{cr4}). Minimizing QBER is equivalent to maximizing $\mathbf{H}$ given by Eq.(\ref{cr5}). Now, as argued before, maximizing $\mathbf{H}$ with respect to MUBs is unconstrained. On being maximized(with respect to MUBs) it is given by Eq.(\ref{cr6}). Next level of maximization is constrained. Unlike that in $\mathcal{N}_4,$ here maximization of $\mathbf{H}$ with respect to state parameters is performed under the constraint that there is no Bell-CHSH violation in at least one of the three pairs($(A_1,A_{i+1})$):
\begin{eqnarray}\label{cr14}
		t_{i,1}^2+t_{i,2}^2&\leq& 1,\,\,\textmd{\small{for at least one }}i\in\{1,2,3\}
\end{eqnarray}
Let $\mathbf{Q}_0^{'}$ be the critical value of QBER obtained in $\mathbf{N}_4.$ In absence of detectable nonlocality among at least one of the pairs of parties, $\mathbf{Q}$ can never be made less than $\mathbf{Q}_0^{'}.$ However, it can be reduced further once each of the three pairs of trusted parties in $\mathbf{N}_4$ detect nonlocal correlations.\\
Above criterion(Eq.\ref{cr14}) puts restriction over $t_{i,k}.$
Theorem below provides $\mathbf{Q}_0^{'}$ for $\mathbf{N}_4.$
\begin{theorem}\label{theo2}
In the network based $4$-party QKD protocol $\mathbf{N}_4,$ involving three two-qubit states, QBER generated cannot be less than $\mathbf{Q}_0^{'}$$=$$1-(\frac{1+\sqrt{2}}{2\sqrt{2}})^c$ where $c$($1$$\leq$$c$$\leq$$3$) denote number of pairs of central and extreme parties that do not observe Bell-CHSH violation. 
	\end{theorem}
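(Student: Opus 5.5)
The plan is to exploit the fact that, after the unconstrained maximization over MUBs, the quantity $\mathbf{H}$ of Eq.(\ref{cr6}) factorizes over the three sources. Indeed, the sum over $j_1,j_2,j_3\in\{1,2\}$ of $\Pi_{i=1}^3(1+t_{i,j_i})$ is exactly the expansion of a product of sums, so
\begin{equation*}
\mathbf{H}\leq \Pi_{i=1}^{3}\left(2+t_{i,1}+t_{i,2}\right),
\end{equation*}
which generalizes the identical-state simplification Eq.(\ref{cr6i}). Substituting into Eq.(\ref{cr8}) gives
\begin{equation*}
\mathbf{Q}^{'}=1-\Pi_{i=1}^{3}\frac{2+t_{i,1}+t_{i,2}}{4}.
\end{equation*}
The constrained minimization of $\mathbf{Q}^{'}$ over state parameters therefore reduces to maximizing each factor $f_i=(2+t_{i,1}+t_{i,2})/4$ independently. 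This is legitimate because the constraint Eq.(\ref{cr14}) acts separately on each pair's singular values, and the factors are nonnegative.

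Next, fix the set of $c$ pairs $(A_1,A_{i+1})$ that observe no Bell-CHSH violation, so that $t_{i,1}^2+t_{i,2}^2\leq 1$ holds for those $i$.

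For such an $i$, the Cauchy--Schwarz inequality gives $t_{i,1}+t_{i,2}\leq\sqrt{2(t_{i,1}^2+t_{i,2}^2)}\leq\sqrt{2}$. Hence
\begin{equation*}
f_i\leq\frac{2+\sqrt{2}}{4}=\frac{1+\sqrt{2}}{2\sqrt{2}}.
\end{equation*}
Equality holds at $t_{i,1}=t_{i,2}=1/\sqrt{2}$. This is realized, for instance, by a Werner state with visibility $1/\sqrt{2}$, or any state whose largest two singular values equal $1/\sqrt{2}$ (the third being no larger). That confirms attainability within the admissible region.

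For the remaining $3-c$ pairs there is no constraint beyond $t_{i,k}\in[0,1]$. Thus $f_i\leq 1$, with equality for a maximally entangled state ($t_{i,1}=t_{i,2}=1$).

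Multiplying these bounds yields
\begin{equation*}
\mathbf{Q}^{'}\geq 1-\left(\frac{1+\sqrt{2}}{2\sqrt{2}}\right)^{c}.
\end{equation*}
The right-hand side is attained by the states just described, so it is the critical value $\mathbf{Q}_0^{'}$. Finally, since the bound is increasing in $c$, the weakest guarantee (smallest threshold) occurs at $c=1$. This matches the $\approx 14.6\%$ figure quoted in the abstract and is the value used for the security criterion.

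The main obstacle is justifying the factorization and the decoupling step rigorously. Two things must be checked. First, the MUB bound Eq.(\ref{cr6}) is simultaneously achievable for all three sources: each pair chooses its two MUB directions along its own two largest singular directions, and the choices do not interact across pairs. Second, maximizing a product of nonnegative factors under per-factor constraints is indeed achieved by maximizing each factor separately. Once these are settled, the remaining work is the elementary Cauchy--Schwarz estimate above.
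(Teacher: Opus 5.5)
Your proof is correct, and its overall plan matches the paper's Appendix C. You split according to $c$, push the Bell-violating pairs to $t_{i,1}=t_{i,2}=1$, and cap each non-violating pair at $t_{i,1}=t_{i,2}=1/\sqrt{2}$.

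The key step is done differently, though. The paper treats $c=3,2,1$ separately. It replaces the per-state singular values by $M_j=\max_i t_{i,j}$ (or $N_j$) and maximizes $M_1+M_2$ on the circle $M_1^2+M_2^2=1$ using Lagrange multipliers and a bordered Hessian. Your identity $\sum_{j_1,j_2,j_3}\Pi_{i}(1+t_{i,j_i})=\Pi_i(2+t_{i,1}+t_{i,2})$ decouples the three sources exactly. One Cauchy--Schwarz estimate per factor then covers all $c$ at once and handles the inequality constraint directly, with no need to argue that the optimum lies on the boundary.

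This also makes the decoupling rigorous where the paper's reduction is loose. The constraint $M_1^2+M_2^2\leq 1$ does not follow from the individual constraints $t_{i,1}^2+t_{i,2}^2\leq 1$, because the two maxima can come from different states. For example, $(t_{1,1},t_{1,2})=(1,0)$ and $(t_{2,1},t_{2,2})=(0.7,0.7)$ give $M_1+M_2=1.7>\sqrt{2}$. The paper therefore reaches the right value through a step that needs your factorization to be justified.

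Your uniform formula also exposes a numerical slip in the paper's case (iii), where $1-(2+\sqrt{2})/4\approx 0.14645$, not $0.3784$.

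One minor point on attainability: build the Werner-type state on $|\phi^+\rangle$ rather than the singlet, or apply a local unitary. For the singlet, identical bases give anticorrelated outcomes. Within the paper's singular-value convention this makes no difference to the result.
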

	\textbf{Proof:}See Appendix.C.\\
	For $\mathbf{N}_4,$ theorem.\ref{theo2} thus provides a threshold value of QBER which is obtained under constraint that Bell-CHSH violation is not observed in $c$ number of pairs of parties in Step.3. 
	\subsubsection{Few Special Cases} $\mathbf{Q}_0^{'}$ depends on how many of the three pairs of one extreme and one central party does not show Bell-CHSH violation. Let all the three states used in $\mathbf{N}_4$ be identical. In that case following result is a direct consequence of above theorem.
	\begin{corollary}\label{corr1}
		If $\mathbf{N}_4$ involves three identical two-qubit states and Bell-CHSH violation is not observed in the protocol, QBER generated cannot be less than $\mathbf{Q}_0^{'}$ where:
		\begin{eqnarray}\label{cor11}
			\mathbf{Q}_0^{'}&=& 1-(\frac{1+\sqrt{2}}{2\sqrt{2}})^3\approxeq 0.37814
		\end{eqnarray}   
	\end{corollary}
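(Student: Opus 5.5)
The corollary should follow directly from Theorem~\ref{theo2}, so the plan is to pin down the value of $c$ in the identical-state case and then substitute. When $\rho_1=\rho_2=\rho_3=\varrho$, the three pairs $(A_1,A_{i+1})$ share the same correlation tensor $T$, with ordered singular values $t_1\geq t_2$. Hence the Bell-CHSH condition (Eq.~(\ref{cr14})), namely $t_{i,1}^2+t_{i,2}^2\leq 1$, either holds for all three indices $i$ or for none. The hypothesis that Bell-CHSH violation is not observed in the protocol therefore forces the constraint to hold for every pair, that is, $c=3$. Substituting $c=3$ into $\mathbf{Q}_0^{'}=1-\bigl(\frac{1+\sqrt{2}}{2\sqrt{2}}\bigr)^c$ gives Eq.~(\ref{cor11}).

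For a self-contained check, I would also derive the bound directly from the identical-state form of $\mathbf{H}$. By Eq.~(\ref{cr6i}) we have $\mathbf{H}\leq(2+t_1+t_2)^3$, so the task is to maximize $t_1+t_2$ subject to $t_1^2+t_2^2\leq 1$. By Cauchy--Schwarz, $t_1+t_2\leq\sqrt{2(t_1^2+t_2^2)}\leq\sqrt{2}$, with equality at $t_1=t_2=1/\sqrt{2}$. This point is attainable, for example by a Werner state with suitable visibility whose tensor is proportional to $\textmd{diag}(1,1,-1)$, up to the sign conventions used for singular values. Then Eq.~(\ref{cr8i}) gives
\begin{equation*}
\mathbf{Q}\geq 1-\frac{(2+\sqrt{2})^3}{2^6}=1-\Bigl(\frac{2+\sqrt{2}}{4}\Bigr)^3=1-\Bigl(\frac{1+\sqrt{2}}{2\sqrt{2}}\Bigr)^3,
\end{equation*}
which agrees with the theorem's formula at $c=3$.

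Finally, I would evaluate the constant numerically. We have $\frac{1+\sqrt{2}}{2\sqrt{2}}\approx 0.853553$, and its cube is $\approx 0.62186$, so $\mathbf{Q}_0^{'}\approx 0.37814$.

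The only delicate point is interpreting ``Bell-CHSH violation is not observed'' as failure in all pairs rather than just one. In the identical-state case this is automatic, since the pairs are indistinguishable in their correlation tensors. I do not expect any genuine obstacle beyond this, together with confirming that the extremal point $t_1=t_2=1/\sqrt{2}$ lies inside the admissible square $\mathbf{S}$, which it does.
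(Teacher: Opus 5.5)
Your proposal is correct and follows the paper's route. For identical states the Bell-CHSH status is the same for all three pairs, so no observed violation forces $c=3$, and the corollary is read off from Theorem~\ref{theo2}. Your Cauchy--Schwarz check that $\max(t_1+t_2)=\sqrt{2}$ under $t_1^2+t_2^2\leq 1$ is a shorter version of the Lagrange-multiplier computation in case (i) of Appendix~C, and it gives the same value $1-\bigl(\frac{2+\sqrt{2}}{4}\bigr)^3=1-\bigl(\frac{1+\sqrt{2}}{2\sqrt{2}}\bigr)^3\approx 0.37814$.
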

	Let two of three states used in $\mathbf{N}_4$ be identical. Let $\rho_1$$=$$\rho_2$ and $\rho_3$ be the states used. Protocol will be aborted in any one of the following circumstances: 
	\begin{enumerate}
		\item [(a)] Only $\rho_3$ violates Bell-CHSH inequality. So violation is not observed in $(A_1,A_2),$ $(A_1,A_3)$ and is observed in the pair $(A_1,A_4)$ only. \\
		By Theorem.\ref{theo2}:
		\begin{eqnarray}\label{corr12}
			\mathbf{Q}\geq\mathbf{Q}_0^{'}&=& 1-(\frac{1+\sqrt{2}}{2\sqrt{2}})^2\approxeq0.27145.
		\end{eqnarray}   
		\item [(b)] $\rho_1$ and hence $\rho_2$ both violate Bell-CHSH inequality. So violation is not observed only in $(A_1,A_4)$. Then $\mathbf{Q}_0^{'}$ will be the least compared to all the three cases:
		\begin{eqnarray}\label{corr13}
			\mathbf{Q}\geq\mathbf{Q}_0^{'}&=& 1-(\frac{1+\sqrt{2}}{2\sqrt{2}})^1\approxeq 0.14645.
		\end{eqnarray}
		\item [(c)] None of $\rho_1,\rho_3$ violates Bell-CHSH inequality. So violation is not observed in any of $(A_1,A_2),$ $(A_1,A_3)$ and $(A_1,A_4).$ Here $\mathbf{Q}_0^{'}$ will be same as in case of all three identical states(Eq.(\ref{cor11})).
		\begin{eqnarray}\label{corr13}
			\mathbf{Q}\geq\mathbf{Q}_0^{'}&=& 1-(\frac{1+\sqrt{2}}{2\sqrt{2}})^3\approxeq0.37814.
		\end{eqnarray}
	\end{enumerate}
	Clearly, $\mathbf{Q}_0^{'}$ is monotonic increasing with $c.$ Such a dependency of $\mathbf{Q}_0^{'}$ on number of pairs not showing Bell-CHSH violation will next be used to frame security criterion.
	\subsection{Necessary Security Criteria In $\mathbf{N}_4$}\label{special}
	In $\mathbf{N}_4,$ security check is done in the following two steps:
	\begin{itemize}
		\item \textit{First Check:} In $3^{rd}$ step using violation of Bell-CHSH inequality for each pair $(A_1,A_2),$ $(A_1,A_3),$ $(A_1,A_4).$ Here the check relies upon the fact that each of the three pairs of trusted parties $(A_1,A_{i+1})$ share detectable nonlocal correlations. For detection of such correlations Bell-CHSH inequality is considered.\\
		State $\rho_i$ is shared in between $(A_1,A_{i+1}).$ It may happen that even in presence of Eve, Bell-CHSH violation is obtained from some of $\rho_1,\rho_2,\rho_3$. So to make the security criterion more stringent violation for each of $\rho_1,\rho_2,\rho_3$ is set as a mandate. Now, any two-qubit state $\rho_i$ is Bell-CHSH nonlocal if it satisfies\cite{horo}: 
		%
		\begin{eqnarray}\label{cr15}
			t_{i,1}^2+t_{i,2}^2&>&1.
		\end{eqnarray}
		Eq.(\ref{cr15}) acts as a checking criterion($\mathcal{C}_{\mathbf{N},1},$say) for $\mathbf{N}_4.$\\
		$\mathcal{C}_{\mathbf{N},1}:$ \textit{in $3^{rd}$ step if Eq.(\ref{cr15}) is satisfied $\forall i$$=$$1,2,3$ then next step of the protocol is executed. Otherwise it is aborted.} 
		\item \textit{Second Check:} In $4^{th}$ step of the protocol second security check is provided using $\mathbf{Q}_0^{'}$ from Theorem.\ref{theo2}. QBER can be reduced below $\mathbf{Q}_0^{'}$ when Bell-CHSH violation is observed from each of $\rho_1,\rho_2,\rho_3:$
		\begin{eqnarray}
			\mathbf{Q}&<&\mathbf{Q}_0^{'}.
		\end{eqnarray}
 Let not all three states used in $\mathbf{N}_4$ be identical. As violation needs to be observed in each of the three pairs, negation of subcase.(b), as discussed in subsec.\ref{special}, needs to be satisfied by $\rho_1,\rho_2,\rho_3$:
 \begin{eqnarray}\label{cr16}
 	\mathbf{Q}&<&\mathbf{Q}_0^{'}=0.14645\nonumber\\
 	\Rightarrow 1-\frac{\sum_{\substack{j_i=1,2\\\forall i=1,2,3}}\Pi_{i=1}^3(1+t_{i,j_i})}{2^6}&<&1-(\frac{1+\sqrt{2}}{2\sqrt{2}})\nonumber\\
 	&&\qquad\,\,\textmd{\small{Using Eq.(\ref{cr6})}}\nonumber\\
 	\Rightarrow\sum_{\substack{j_i=1,2\\\forall i=1,2,3}}\Pi_{i=1}^3(1+t_{i,j_i})&>&16\sqrt{2}(1+\sqrt{2}).
 \end{eqnarray}
Let $\mathcal{C}_{\mathbf{N},2}$ denote the second security criterion based on Eq.(\ref{cr16}).\\
$\mathcal{C}_{\mathbf{N},2}:$\textit{if Eq.(\ref{cr16}) is satisfied then $\mathbf{N}_4$ is used to generate secure key. Otherwise it is aborted.}\\
However, if all the three states are identical, $\rho_i$$=$$\varrho,\forall i$, then by Cor.\ref{corr1}, Eq.(\ref{cr16}) gets modified:
		\begin{eqnarray}\label{cr16i}
			t_1+t_2&>&\sqrt{2},\,\,\textmd{\small{Using Eq.(\ref{cr6i})}}.
		\end{eqnarray}
		In that case, second security criterion($\mathcal{C}_{\mathbf{N},2}^{'},$say) is similar to $\mathcal{C}_{\mathbf{N},2}$ with only Eq.(\ref{cr16}) replaced by Eq.(\ref{cr16i}).
	\end{itemize}
	\subsection{Characterizing Two Qubit States Used In $\mathbf{N}_4$}
Let the parties share three identical two-qubit states $\rho_i$$=$$\varrho,\forall i$ in $\mathbf{N}_4$. In this case Eq.(\ref{cr15}) gets simplified:
	\begin{eqnarray}\label{cr17}
		t_1^2+t_2^2&>&1
	\end{eqnarray}
Parties will use $\mathcal{C}_{\mathbf{N},1}$ and $\mathcal{C}_{\mathbf{N},2}^{'}$ as first and second security criteria for checking validity of $\mathbf{N}_4.$ $\varrho$ will thus be useful for successfully running the protocol if its correlation tensor $T$ satisfies both Eq.(\ref{cr17}) and Eq.(\ref{cr16i}).\\	
	$\varrho$ being an arbitrary two-qubit state, $t_{1},t_{2}$ lie within $\mathbf{S}$(Eq.(\ref{basis8i})). Corresponding to any point $P$ lying within the positive quadrant($\mathbf{C}_{+}^{(1)}$) of the unit circle(given by equality in Eq.(\ref{cr17})), state $\varrho_P$ cannot be used in $\mathbf{N}_4.$ This is because $\mathbf{N}_4$ involving $\varrho_P$ cannot pass $\mathcal{C}_{\mathbf{N},1}$ and will thus be aborted in third step only.\\
	Let $\mathbf{L}$(see Fig.\ref{fig5}) denote the line provided by equality in Eq.(\ref{cr16i}). $\mathbf{N}_4$ will fail if $\varrho_P,$ corresponding to any point $P$ lying below $\mathbf{L},$ is used in the protocol. This is because such $\mathbf{N}_4$ cannot pass $\mathcal{C}_{\mathbf{N},2}^{'}$. So $\varrho$ will be useful only if it corresponds to any point $P$ in $\mathbf{S}$ lying above the tangent line $\mathbf{L}.$ 
\begin{center}
	\begin{figure}
		\includegraphics[width=3.6in]{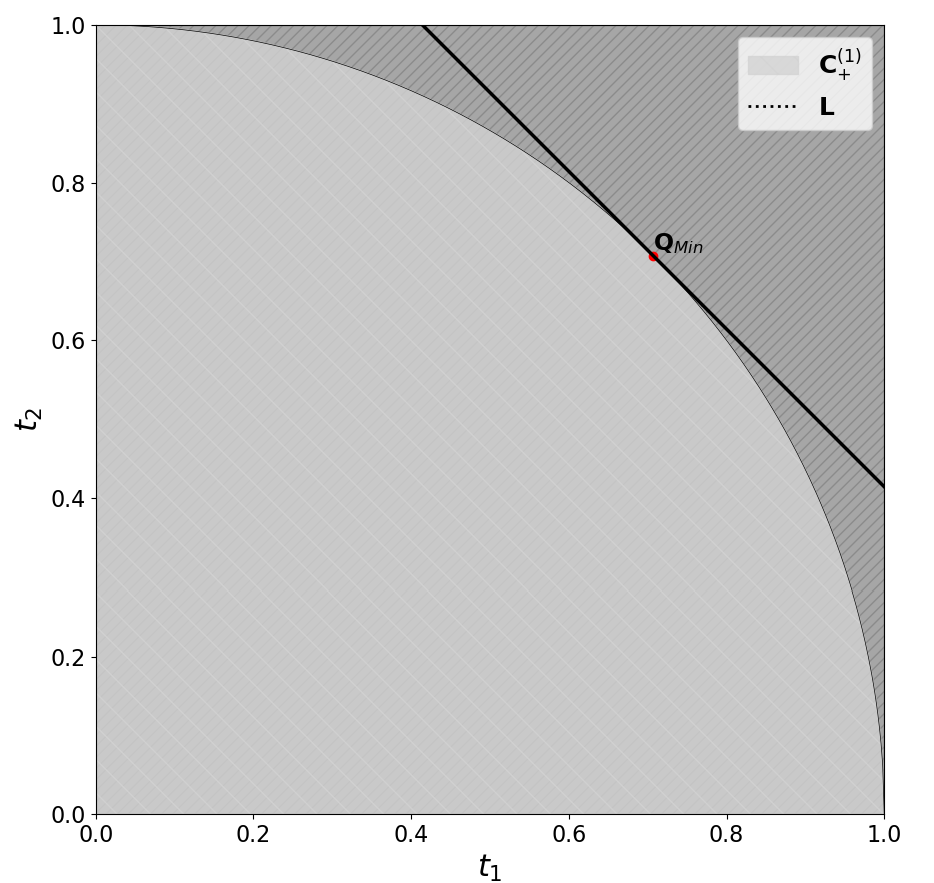} 
		\caption{\emph{
State($\varrho_P$) corresponding to any point $P$ in $\mathbf{S}$ lying below tangent line $\mathbf{L}$ cannot be used to design $\mathbf{N}_4.$ Only for any point $P$ lying above $\mathbf{L}$ in $\mathbf{S},$ $\varrho_P$ can be used to run $\mathcal{N}_4$ successfully. Point $\mathbf{Q}_{Min}$ represents the minima($\frac{1}{\sqrt{2}}$,$\frac{1}{\sqrt{2}}$) of QBER($\mathbf{Q}$) under assumption of no Bell-CHSH violation.}}
\label{fig5}
	\end{figure}
\end{center}  
	\paragraph{Non-identical States:} Let $\rho_i$ used in $\mathbf{N}_4$ have correlation tensor $T_i$ as follows:
\begin{eqnarray}\label{ext2}
	T_1&=&\textmd{diag}(0.92,t_{1,2},t_{1,3})\nonumber\\
		T_2&=&\textmd{diag}(0.91,t_{2,2},t_{2,3})\nonumber\\
			T_3&=&\textmd{diag}(0.93,t_{3,2},t_{3,3})\nonumber\\
\end{eqnarray}
As the states shared are not all identical, the parties use $\mathcal{C}_{\mathbf{N},2}$ as second security criterion. There exist state parameters that satisfy both Eq.(\ref{cr15}) and Eq.(\ref{cr16}), i.e., abide by both $\mathcal{C}_{\mathbf{N},1}$ and $\mathcal{C}_{\mathbf{N},2}$:
\begin{eqnarray}
t_{1,2}>0.391918&&\nonumber\\
t_{2,2}>0.414608&&\nonumber\\
t_{3,2}>0.36756&&\nonumber\\
0.464541 -t_{2,2} (0.133681 + 0.045625 t_{3,2}) +&&\nonumber\\ 
t_{1,2} (-0.133223 + t_{2,2} (-0.0457813 - &&\nonumber\\
0.015625 t_{3,2}) - 0.0454688 t_{3,2}) - 0.132769 t_{3,2}&>&0
\end{eqnarray}
Consequently such states can be used to design $\mathbf{N}_4$ for key generation(see Fig.\ref{fig6}).
\begin{center}
	\begin{figure}
		\includegraphics[width=3.6in]{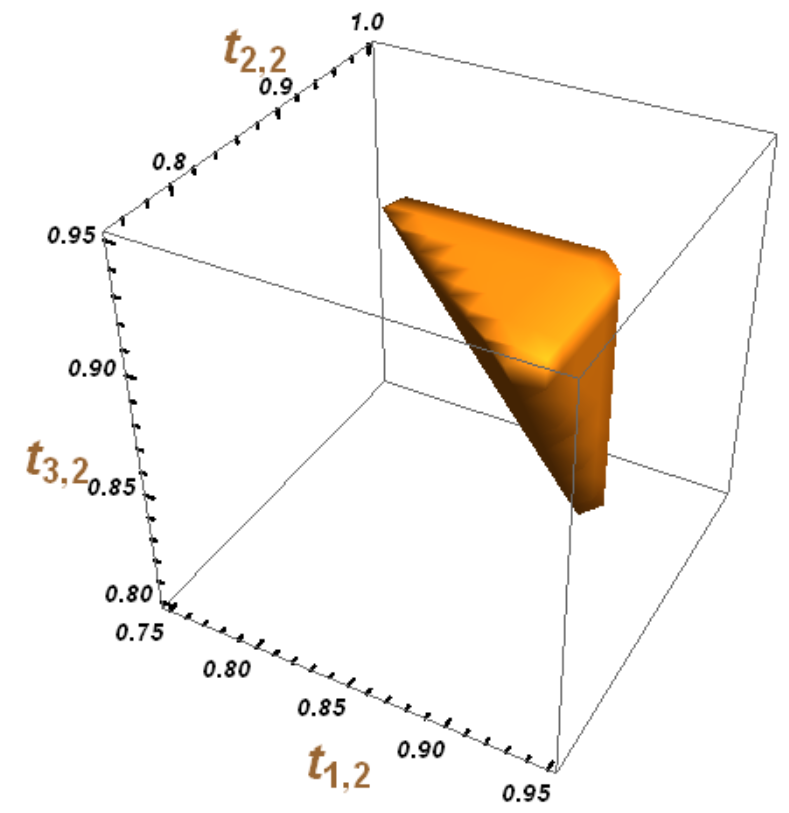} 
		\caption{\emph{Shaded region forms a part of three-dimensional space formed by $2^{nd}$ largest singular value $t_{i,2}$ of correlation tensor $T_i$ of $\rho_i(i$$=$$1,2,3)$ specified by Eq. (\ref{ext2}). $\mathbf{N}_4$ can be designed by $\rho_1,\rho_2,\rho_3$ corresponding to any point from the shaded region.}}
		\label{fig6}
	\end{figure}
\end{center}
\par Now that two different network based QKD protocols have been designed, it becomes pertinent to compare their efficiency. In following subsection the two protocols are compared in terms of the security provided by them. 
\subsection{$\mathcal{N}_4$ More Secure Than $\mathbf{N}_4$}
Both the protocols designed here rely upon violation of some correlator based inequalities for framing security criteria to detect presence of malicious party. Also the extent up to which the QBER generated in the protocols can be reduced depend on these inequalities. A comparison of both first and second security criteria will aid in comparing security of these protocols to detect presence of eavesdropper.
\subsubsection{$\mathcal{C}_{\mathcal{N},1}$ Versus $\mathcal{C}_{\mathbf{N},1}$} $\mathcal{C}_{\mathcal{N},1}$ involves violation of trilocal inequality(Eq.(\ref{ineqs})) by four-partite correlations whereas $\mathcal{C}_{\mathbf{N},1}$ depends upon violation of Bell-CHSH by bipartite correlations in all three possible pairs of central and an extreme party. Now, as already pointed out before, only genuine form of four partite network nonlocal correlations can violate Eq.(\ref{ineqs}). Again such form of network nonlocality cannot exist if at least one pair $(A_1,A_{i+1})$ share local correlations\cite{bilo6,birev}. Even if only one of  $\rho_1,\rho_2,\rho_3$ does not violate Bell-CHSH whereas both the others show maximal violation then also such states cannot violate trilocal inequality(Eq.(\ref{ineqs})). For instance, let $\rho_1,\rho_2,\rho_3$ be such that $\rho_1,\rho_2$(say) show maximum quantum violation whereas the remaining state($\rho_3$) does not violate Bell-CHSH inequality:
\begin{eqnarray}\label{comp1}
	t_{i,1}^2+t_{i,2}^2&=&2,\,\,i=1,2\nonumber\\
	t_{3,1}^2+t_{3,2}^2&\leq&1.
\end{eqnarray}
Thus above states violate $\mathcal{C}_{\mathbf{N},1}$ and cannot be used to design $\mathbf{N}_4.$\\
Again $t_{i,j}$ satisfying Eq.(\ref{comp1}) will always satisfy the following relation:
\begin{eqnarray}
	t_{3,1}^{\frac{2}{3}}+t_{3,2}^{\frac{2}{3}}&\leq& 2^{\frac{2}{3}}
\end{eqnarray}
\begin{center}
	\begin{figure}
		\includegraphics[width=3.6in]{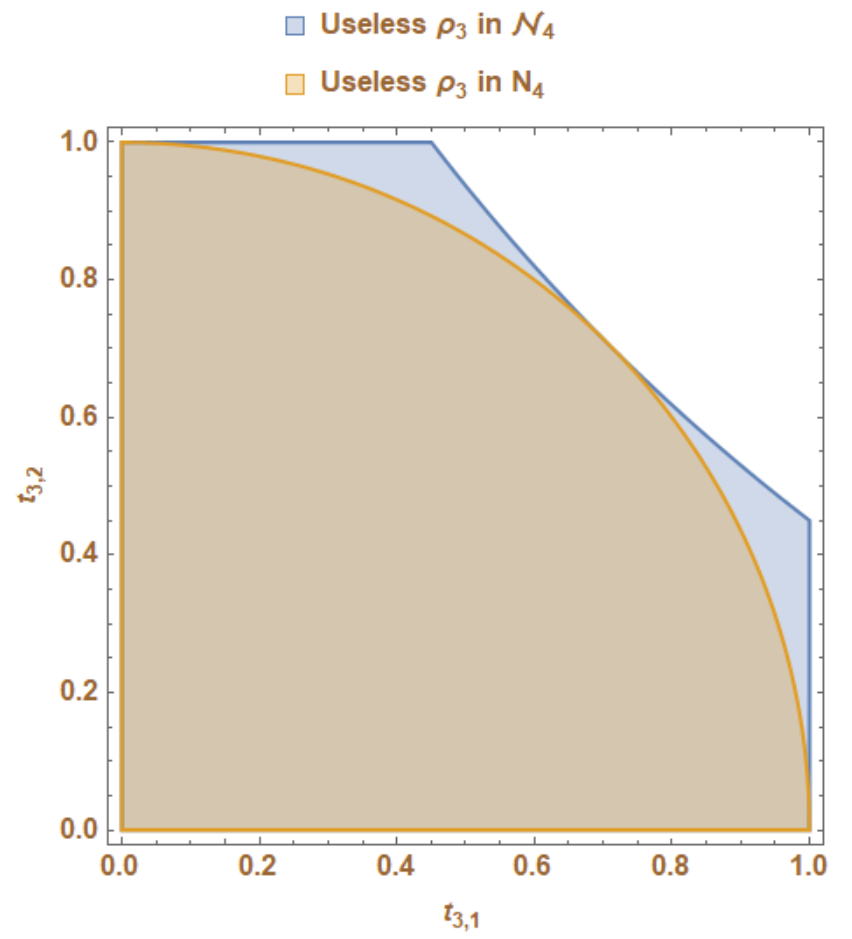} 
		\caption{\emph{Figure clearly indicates that if arbitrary two-qubit state $\rho_3,$ used with two maximally entangled two-qubit states($\rho_1,\rho_2$), results in abortion of $\mathbf{N}_4$ in $3^{rd}$ step then $\rho_3$ will also result in abortion of $\mathcal{N}_4$ in $3^{rd}$ step. This in turn justifies $\mathcal{R}_1.$}}
		\label{fig7}
	\end{figure}
\end{center}
Thus $\mathcal{N}_4$ also cannot be designed using these states. It is thus clear that states violating $\mathcal{C}_{\mathbf{N},1}$ will also violate $\mathcal{C}_{\mathcal{N},1}.$ Consequently, the following result($\mathcal{R}_1$,say) holds.\\
$\mathcal{R}_1:$\textit{using any set of $\rho_1,\rho_2,\rho_3,$ if $\mathbf{N}_4$ is aborted in the third step then that set of two-qubit states will also result in abortion of $\mathcal{N}_4$ in third step.}\\
However, the reverse of $\mathcal{R}_1$ is not always true. Precisely, if some $\rho_1,\rho_2,\rho_3$ satisfy $\mathcal{C}_{\mathbf{N},1}$ then they do not necessarily satisfy $\mathcal{C}_{\mathcal{N},1}.$ This is because for any set of three two-qubit states  existence of bipartite nonlocal correlations in each of three pairs of central and extreme party($(A_1,A_{i+1})\forall i$) does not ensure generation of full network nonlocality among $A_1,A_2,A_3,A_4.$ \\
For instance, consider the following two-qubit states:
\begin{eqnarray}\label{ext3}
	\rho_1\textmd{ \small{with} }T_1&=&\textmd{diag}(0.92,t_{1,2},t_{1,3})\nonumber\\
	\rho_2\textmd{ \small{with} }	T_2&=&\textmd{diag}(0.94,t_{2,2},t_{2,3})\nonumber\\
	\rho_3\textmd{ \small{with} }	T_3&=&\textmd{diag}(0.95,t_{3,2},t_{3,3})\nonumber\\
\end{eqnarray}
There exist parameters $t_{1,2},t_{2,2},t_{3,2}$ for which above states satisfy $\mathcal{C}_{\mathbf{N},1}$ but violate $\mathcal{C}_{\mathcal{N},1}$(see sub-fig.(i) in Fig.\ref{fig8}).\\
\begin{center}
	\begin{figure}
		\begin{tabular}{c}
			\subfloat[ ]{\includegraphics[trim = 0mm 0mm 0mm 0mm,clip,scale=0.39]{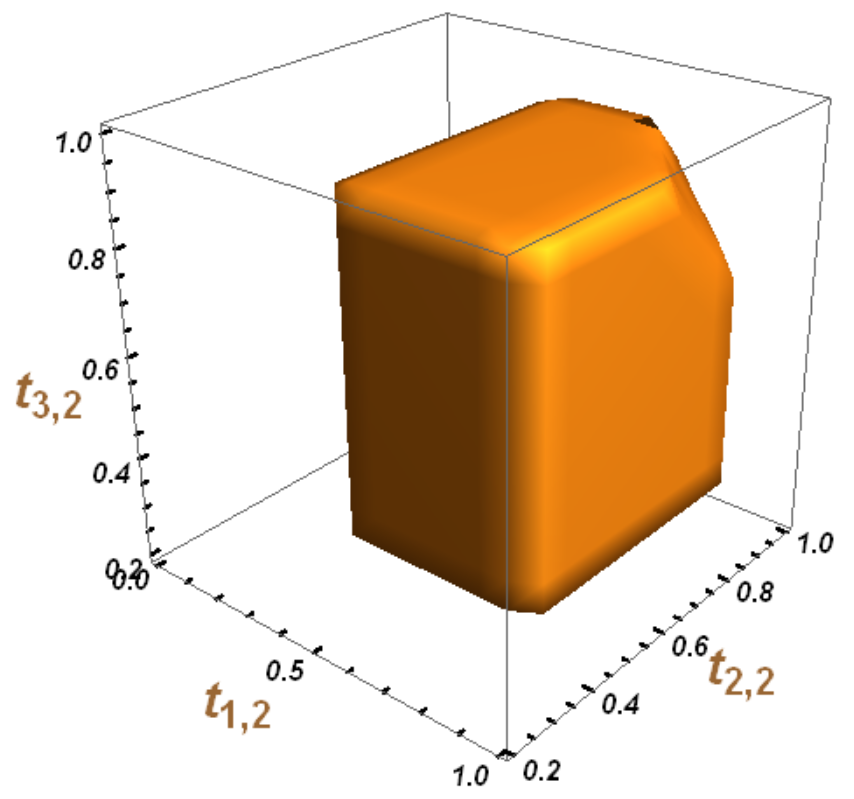}}\\
			\subfloat[]{\includegraphics[trim = 0mm 0mm 0mm 0mm,clip,scale=0.39]{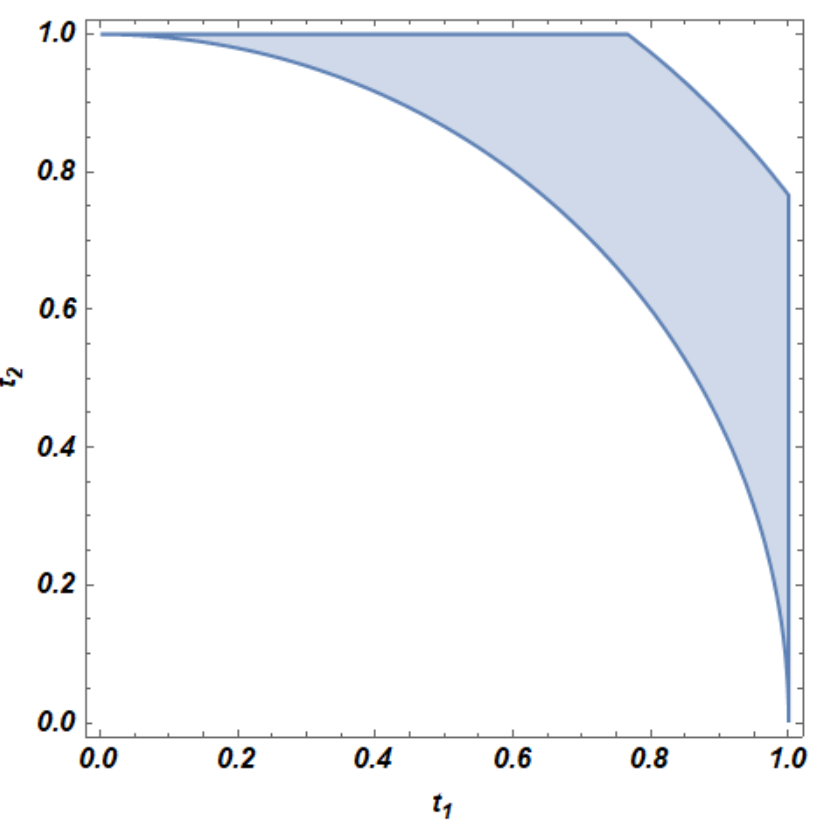}}\\
		\end{tabular}
		\caption{\emph{Shaded region in sub-figure(i) gives the state parameters of specific $\rho_1,\rho_2,\rho_3$(Eq.(\ref{ext3})) which violate $\mathcal{C}_{\mathcal{N},1}$ but satisfy $\mathcal{C}_{\mathbf{N},1}.$ Consequently when such $\rho_i$ are used in $\mathbf{N}_4,$ the protocol passes first security check. However,when the same states are used in $\mathcal{N}_4,$ it gets aborted in $3^{rd}$ step only. So the shaded region provides instances of non-identical two-qubit states in support of the result $\mathcal{R}_2.$ Similar implication is provided by sub-figure(ii) when $\rho_1,\rho_2,\rho_3$ are identical($\varrho$). Precisely, shaded region in sub-figure.(ii) gives arbitrary two-qubit state $\varrho$ satisfying $\mathcal{C}_{\mathbf{N},1}$ but violating $\mathcal{C}_{\mathcal{N},1}.$}}
\label{fig8}
\end{figure}
\end{center}
Thus for some given $\rho_1,\rho_2,\rho_3$ it may happen that $\mathbf{N}_4$ passes first security check and hence eavesdropper does not get detected up to $3^{rd}$ step of $\mathbf{N}_4.$ However for the same states, $\mathcal{N}_4$ gets aborted after first security check only. Due to existence of such states and also based on $\mathcal{R}_1$, following result($\mathcal{R}_2,$say) thus becomes evident.\\
$\mathcal{R}_2:$\textit{$\mathcal{C}_{\mathcal{N},1}$ is more stringent security criterion compared to $\mathcal{C}_{\mathbf{N},1}.$}
\subsubsection{Comparison Of Second Security Check}
Let the parties use the criterion $\mathcal{C}_{\mathcal{N}_2}$ in $\mathcal{N}_4$ and  $\mathcal{C}_{\mathbf{N}_2}$ in $\mathbf{N}_4.$ As discussed before, once the protocol passes the first security check, QBER generated can be less than $13.7\%$(approx) in $\mathcal{N}_4.$ However, in $\mathbf{N}_4,$ QBER can be reduced below $14.6\%.$ This in turn points out the possibility that for some states QBER reduction will be more in the former protocol. \\
Comparison of Eq.(\ref{cr9}) with that of Eq.(\ref{cr16}) clearly points out that there may exist states for which $\mathcal{C}_{\mathbf{N}_2}$ is satisfied whereas $\mathcal{C}_{\mathcal{N}_2}$ is violated. However the reverse is not possible. Precisely, for some states one may get:
\begin{eqnarray}\label{pecu5}
16\sqrt{2}(1+\sqrt{2})<\sum_{\substack{j_i=1,2\\\forall i=1,2,3}}\Pi_{i=1}^3(1+t_{i,j_i})&\leq&16(3+(2^{\frac{2}{3}}-1)^{\frac{3}{2}}).\nonumber\\
\end{eqnarray}
Consider the following two-qubit states:
\begin{eqnarray}\label{ext4}
	\rho_1\textmd{ \small{with} }T_1&=&\textmd{diag}(0.92,t_{1,2},t_{1,3})\nonumber\\
	\rho_2\textmd{ \small{with} }T_2&=&\textmd{diag}(0.91,t_{2,2},t_{2,3})\nonumber\\
	\rho_3\textmd{ \small{with} }T_3&=&\textmd{diag}(0.94,t_{3,2},t_{3,3})\nonumber\\
\end{eqnarray}
There exist parameters $t_{1,2},t_{2,2},t_{3,2}$(see Fig.\ref{fig9}) for which above states satisfy Eq.(\ref{pecu5}). \\
\begin{center}
	\begin{figure}
		\begin{tabular}{c}
			\subfloat[ ]{\includegraphics[trim = 0mm 0mm 0mm 0mm,clip,scale=0.39]{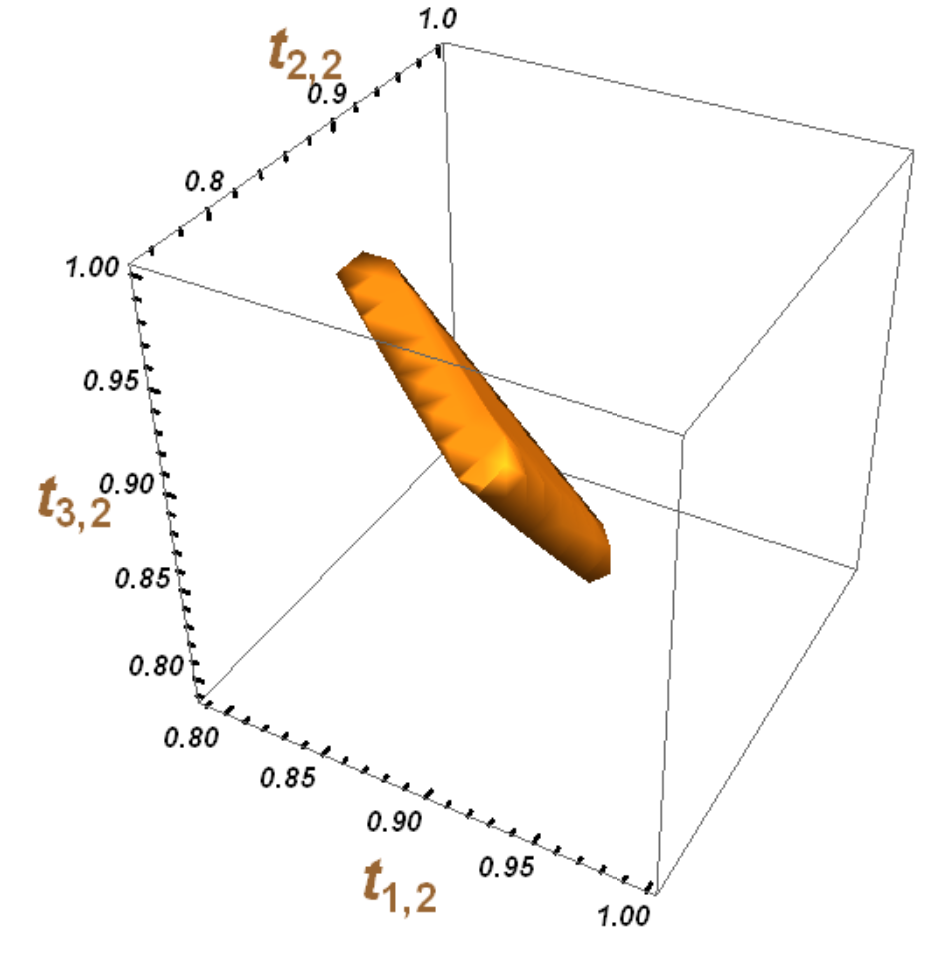}}\\
			\subfloat[]{\includegraphics[trim = 0mm 0mm 0mm 0mm,clip,scale=0.39]{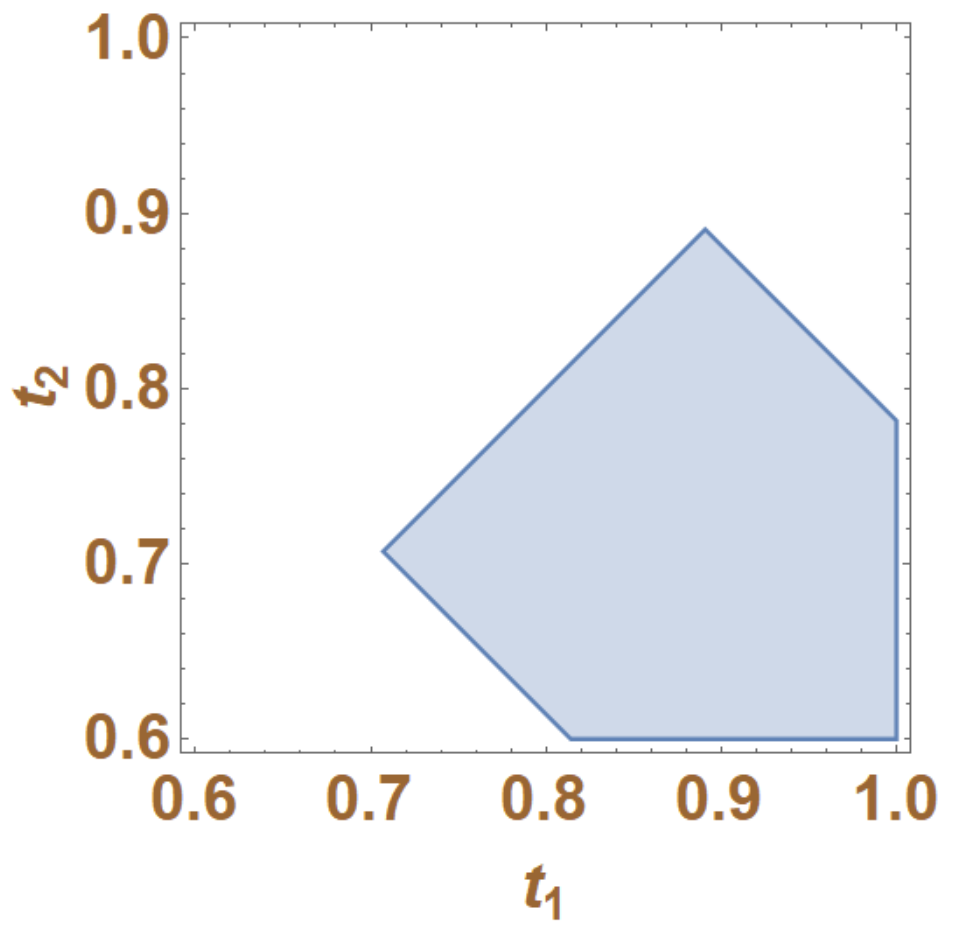}}\\
		\end{tabular}
		\caption{\emph{Shaded region in sub-figure(i) gives the state parameters of $\rho_1,\rho_2,\rho_3$(specified by Eq.(\ref{ext4})) which violate $\mathcal{C}_{\mathcal{N},2}$ but satisfy both $\mathcal{C}_{\mathbf{N},1}$ and $\mathcal{C}_{\mathbf{N},2}$ in $\mathbf{N}_4.$ Consequently when such $\rho_i$ are used in $\mathbf{N}_4,$ the protocol runs successfully. However,when the same states are used in $\mathcal{N}_4,$ it gets aborted in the sifting step. This sub-figure thus provides instances of non-identical two-qubit states in support of the result $\mathcal{R}_3.$ Similarly, region in sub-figure(ii) also supports $\mathcal{R}_3$ in case $\rho_1,\rho_2,\rho_3$ are identical($\varrho$). Precisely, corresponding to any point in the shaded region in sub-figure(ii), two-qubit state $\varrho$ satisfies $\mathcal{C}_{\mathbf{N},2}^{'}$ but violates $\mathcal{C}_{\mathcal{N},2}^{'}.$}}
		\label{fig9}
	\end{figure}
\end{center}

Now let both $\mathcal{N}_4$ and $\mathbf{N}_4$ involve identical states($\varrho$) and the parties use $\mathcal{C}_{\mathcal{N}_2}^{'}$ and $\mathcal{C}_{\mathbf{N}_2}^{'}$ in the respective protocols. Let $\varrho$ be such that the largest two singular values $t_1,t_2$ of $\varrho^{'}$s correlation tensor($T$) satisfy the following relation:
\begin{eqnarray}\label{pecu8}
\sqrt{2}<t_1+t_2&\leq&2^{\frac{5}{6}}\,\,\textmd{\small{Using Eqs.(\ref{pecu3},\ref{cr16i})}}
\end{eqnarray}
Clearly, $\varrho$ satisfying Eq.(\ref{pecu8}) satisfies Eq.(\ref{cr16i}). $\mathbf{N}_4$ using $\varrho$ thus pass second security check also. Eavesdropper(if any) thus remains undetected if one uses $\mathbf{N}_4$ to generate raw key. However, when the same $\varrho$(satisfying Eq.(\ref{pecu8})) be used in $\mathcal{N}_4,$ the protocol gets aborted in the sifting stage as $\mathcal{C}_{\mathcal{N}_2}^{'}$ is violated. Based on all these observations, one thus gets the following result($\mathcal{R}_3,$say).\\
$\mathcal{R}_3:$\textit{Second security check is more stringent in $\mathcal{N}_4$ compared to that in $\mathbf{N}_4.$ }\\
Such an efficiency of $\mathcal{C}_{\mathcal{N}_2}^{'}$(or $\mathcal{C}_{\mathcal{N}_2}^{'}$) over $\mathcal{C}_{\mathbf{N}_2}^{'}$(or $\mathcal{C}_{\mathbf{N}_2}^{'}$) is supported by the fact that the former involves violation of trilocal inequality whereas the latter only involves standard Bell-CHSH violation. Consequently, second security criteria in $\mathcal{N}_4$ relies upon genuine network nonlocality in contrast to second security criteria in $\mathbf{N}_4$ which only relies upon detection of bipartite nonlocality in all possible pairs of central and edge parties.\\
Combining $\mathcal{R}_1,\mathcal{R}_2$ and $\mathcal{R}_3,$ it can thus safely be concluded  that \textit{as an entanglement assisted network based QKD protocol, $\mathcal{N}_4$ offers better security than $\mathbf{N}_4.$}
\subsection{Efficiency Attributable To Truly Connected Structure}
From earlier discussions it is clear that security of the entire protocol rests upon detection of full network nonlocal(FNN) correlations. Now, recalling the steps of $\mathcal{N}_4,$ it can be seen that:
\begin{itemize}
	\item using some copies of each of the three states, the protocol engages a global quantum state structure($\rho_G$,say) in form of $\rho_G$$=$$\otimes_{i=1}^3\rho_i$ is engaged over which the trusted parties perform local measurements(Steps.2(ii) and 2(b)). 
	It is thus this global structure($\rho_G$) of quantum states which is utilized in $\mathcal{N}_4$ to exploit some form of non-classical correlations(FNN) that cannot be decomposed into pairwise nonlocal resources in the protocol. So, here all the trusted parties need to collaborate together in order to generate a secret key. Intent of any eavesdropper to tamper with even one link of the network thus disrupts the entire network correlation pattern. This in turn increases the chance of detecting eavesdropper in $\mathcal{N}_4$.
\end{itemize}
However, $\mathbf{N}_4$'s security entirely relies upon exploitation of bipartite nonlocality individually in each subset of central and an edge party. Existence of pairwise nonlocality can be interpreted as a collection of independent correlations between different pairs. Hence, collaboration of all the legitimate users is not required for security purpose in $\mathbf{N}_4.$\\
In any entanglement assisted QKD protocol security is fundamentally about guaranteeing secrecy assuming existence of malicious third party and not about enhancing key generation under ideal assumptions. So, more stringent security check offered by $\mathcal{N}_4$ may indeed reject some noisy but honest states for the purpose of key generation. By doing so, $\mathcal{N}_4$ prioritizes eliminating any possibility of undetected eavesdropping. From a security-theoretic perspective, $\mathcal{N}_4$ is thus more efficient than $\mathbf{N}_4.$
\section{Conclusion}\label{conc}
Manifestation of the notion of full network nonlocality for security analysis in a network based QKD protocol($\mathcal{N}_4$) has been the mainstay of present work. A four-partite trilocal network based entanglement assisted protocol has been designed for generating quantum key. Two-fold security checks have been incorporated in $\mathcal{N}_4.$ While one such security checking step exploits violation of trilocal inequality, the other one relies upon reducing QBER below some threshold value($\mathbf{Q}_0$). Such threshold values of QBER are derived under assumption of no violation of trilocal inequality. 
\par Exploiting full network nonlocality for security analysis has aided in characterizing arbitrary two-qubit states in context of utilizing them to execute $\mathcal{N}_4$ successfully. Such characterization has been obtained in terms of singular values of correlation tensors of the states used. 
\par Another network based QKD protocol($\mathbf{N}_4$) has been introduced. Security checks in $\mathbf{N}_4$ rely only upon Bell-CHSH violation. This protocol also has been analyzed in a similar way. A comparison of the security offered by $\mathcal{N}_4$ and $\mathbf{N}_4$ clearly points out that the former is more secure for generating raw key. This in turn ensures efficiency of $\mathcal{N}_4$ over $\mathbf{N}_4$ from perspective of offering unconditional security. 
\par Present work confines to framing a four party QKD protocol only. It will be interesting to generalize this approach for designing $n$-partite QKD protocol for any finite $n.$Such a generalization warrants future investigation owing to extensive technological advancement towards development of scalable quantum networks. Also violation of the trilocal inequality serves only as a necessary criterion for verifying security. It will be interesting to frame both necessary and sufficient security criteria for $\mathcal{N}_4.$  
\par In practical scenarios, testing of any correlator based inequality is not devoid of loopholes. Both $\mathcal{N}_4$ and $\mathbf{N}_4$ are dependent on violation of such inequalities. So their experimental demonstration can suffer from several loopholes such as detection loopholes\cite{lp2}, locality loopholes\cite{lp11,lp1}, freedom-of-choice loophole\cite{lp3}.
Besides, classical communication over public channel(for key generation) forms a potent factor of experimental imperfections. Exploring possible means of closing such loopholes is a potential direction of future research. 
\par For verifying security both the protocols rely upon some Bell-type inequalities. However, comparison of information content of trusted parties with that of an untrusted party provides the most obvious way to verify
security. It will thus be interesting to perform security analysis of these protocols in terms of such information content. Establishing secret key rate in $\mathcal{N}_4$ and $\mathbf{N}_4$ also warrants investigation. 

\section*{Appendix.A}
\textit{Proof of theorem.\ref{theo1}:} Here $\rho_1$$=$$\rho_2$$=$$\rho_3$$=$$\varrho$ with correlation tensor $T$$=$$\textmd{diag}(t_1,t_2,t_3).$\\
 It is clear from discussion in main text that finding critical error rate $\mathbf{Q}_0$ reduces to the task of finding maximum value of $\mathbf{H}$(Eq.(\ref{cr6})) with respect to $t_{i}$ subject to the constraint that trilocal inequality(Eq.(\ref{ineqs})) is not violated:
\begin{eqnarray}\label{appa1}
\textmd{\small{Maximize}}&&(2+t_1+t_2)^3\nonumber\\
\textmd{\small{Sub To:}}&& t_{1}^2+t_{2}^2\leq 2^{\frac{2}{3}}
\end{eqnarray}
Consider the following maximization problem:
\begin{eqnarray}\label{appa4}
	\textmd{\small{Maximize}}&&t_1+t_2\nonumber\\
	\textmd{\small{Sub To:}}&&t_1^2+t_2^2=2^{\frac{2}{3}}\nonumber\\
	&&	t_1,t_2\geq 0
\end{eqnarray}
Maxima obtained from above maximization problem(Eq.(\ref{appa4})) will be the maxima for the required maximization problem(Eq.(\ref{appa1})).\\
Lagrangian($\mathcal{L}(t_1,t_2,\lambda),$say) corresponding to the maximization problem(Eq.(\ref{appa4})) is given by:
\begin{eqnarray}\label{appa6}
	\mathcal{L}(t_1,t_2,\lambda)&=&t_1+t_2+\lambda(t_1^2+t_2^2-2^{\frac{2}{3}})
\end{eqnarray}
Critical points are given by:
\begin{eqnarray}\label{appa7}
\frac{\partial\mathcal{L}}{\partial t_j}&=&0\nonumber\\
t_j&=&-\frac{1}{2\lambda}\,\,\forall j=1,2
\end{eqnarray}
Using values of $t_1,t_2$ from above in the constraint of Eq.(\ref{appa6}):
\begin{eqnarray}
	\frac{1}{2\lambda^2}&=&2^{\frac{2}{3}}\nonumber\\
	\frac{1}{2\lambda}&=&\pm2^{\frac{1}{6}}
\end{eqnarray}
Using $\frac{1}{2\lambda}$$=$$-2^{\frac{1}{6}}$ in the constraint, one gets:
\begin{eqnarray}
	t_j&=&2^{-\frac{1}{6}}\,\,\forall j=1,2
\end{eqnarray}
The critical point $K$(say) of the maximization problem(Eq.(\ref{appa4})) is thus given by:
\begin{eqnarray}\label{appa8}
	K&=&(2^{-\frac{1}{6}},2^{-\frac{1}{6}})
\end{eqnarray}
At the critical point $K$(Eq.(\ref{appa8})), the Hessian matrix($\mathcal{H}$,say) is given by:
\begin{eqnarray}
	[\mathcal{H}]_K&=&\left[ {\begin{array}{ccc}
			0&h_{t_1}&h_{t_2}\\
			h_{t_1}&\mathcal{L}_{t_1t_1}&\mathcal{L}_{t_1t_2}\\
			h_{t_2}&\mathcal{L}_{t_1t_2}&\mathcal{L}_{t_2t_2}\\
	\end{array} } \right]_{K}\nonumber\\
	&=&\left[ {\begin{array}{ccc}
			0&2^{\frac{5}{6}}&2^{\frac{5}{6}}\\
			2^{\frac{5}{6}}&2^{\frac{7}{6}}&0\\
			2^{\frac{5}{6}}&0&2^{\frac{7}{6}}\\
	\end{array} } \right]\nonumber\\
\end{eqnarray}
Determinant of $[\mathcal{H}]_K$ turns out to be negative. Hence $K$(Eq.(\ref{appa8})) is the maxima of the maximization problem given in Eq.(\ref{appa4})  and hence the maxima of the original maximization problem(Eq.(\ref{appa1})) .\\
Maximum value of $\mathbf{H}$ is thus given by:
\begin{eqnarray}
	\mathbf{H}]_{\textmd{\tiny{Max}}}&=&8(1+2^{-\frac{1}{6}})^3\nonumber\\
	&=&4\sqrt{2}(1+2^{\frac{1}{6}})^3
\end{eqnarray}
Minimum value of QBER $\mathbf{Q}_0$ is thus given by:
\begin{eqnarray}
	\mathbf{Q}_0&=&1-\frac{\sqrt{2}(1+2^{\frac{1}{6}})^3}{16}\approxeq 0.154887
\end{eqnarray}
Hence the theorem is proved.   $\blacksquare$
	\section*{Appendix.B}
\textit{Proof of theorem.\ref{theo11}:}Here the states $\rho_1,\rho_2,\rho_3$ are not all identical. \\
Using expression of $\mathbf{H}$ provided by Eq.(\ref{cr6i}), the original maximization problem to be solved here is given by:
\begin{eqnarray}\label{appna1}
\textmd{\small{Maximize}}&&\sum_{\substack{j_i=1,2\\\forall i=1,2,3}}\Pi_{i=1}^3(1+t_{i,j_i})\nonumber\\
\textmd{\small{Sub To:}}&& \Pi_{i=1}^3 (t_{i,1})^\frac{2}{3}+\Pi_{i=1}^3 (t_{i,2})^\frac{2}{3}\leq 2^{\frac{2}{3}}
\end{eqnarray}
Now $t_{i,j}$$\in$$[0,1]$ $\forall i,j.$\\
The expression to be maximized here is the sum of the products of positive quantities $t_{i,j}.$ Clearly
$t_{i,j}$$=$$1$ $\forall i,j$ will provide the maximum value. However, it will not respect the constraint above(Eq.(\ref{appna1})) provided by non-violation of the trilocal inequality(Eq.(\ref{ineqs})) as
\begin{eqnarray}
	\Pi_{i=1}^3 (t_{i,1})^\frac{2}{3}+\Pi_{i=1}^3 (t_{i,2})^\frac{2}{3}=2\nleq 2^{\frac{2}{3}}
\end{eqnarray}
So,$t_{i,j}$$=$$1$ $\forall i,j$ cannot be the required maxima of the maximization problem(Eq.(\ref{appna1})).\\
W.L.O.G., let $t_{i,j}$$=$$1$ $\forall i,j$ except $t_{3,2}.$ For that the given constraint in Eq.(\ref{appna1}) can be satisfied provided $t_{3,2}$ abides by the following constraint:
\begin{eqnarray}\label{appna2}
	t_{3,2}^{\frac{2}{3}}&\leq & 2^{\frac{2}{3}}-1
\end{eqnarray}
The objective function in Eq.(\ref{appna1}) gets simplified as:
\begin{eqnarray}
\sum_{\substack{j_i=1,2\\\forall i=1,2,3}}\Pi_{i=1}^3(1+t_{i,j_i})&=&16(3+t_{3,2})	.
\end{eqnarray}
So the task reduces to solving the following maximization problem:
\begin{eqnarray}\label{appna4}
\textmd{\small{Maximize}}&&\,t_{3,2}\nonumber\\
\textmd{\small{Sub To:}}&&\,	(t_{3,2})^{\frac{2}{3}}= 2^{\frac{2}{3}}-1
	\end{eqnarray}
Lagrangian for above maximization problem(Eq.(\ref{appna4})) is given by:
\begin{eqnarray}\label{appna5}
	\mathcal{L}(t_{3,2},\lambda)&=&t_{3,2}+\lambda(t_{3,2}^{\frac{2}{3}}-2^{\frac{2}{3}}+1)
\end{eqnarray}
Using the Lagrangian method, the maxima($t_{3,2}^{'}$,say) of the above maximization problem(Eq.(\ref{appna4})) is given by:
\begin{eqnarray}\label{appna6}
	t_{3,2}^{'}&=&(2^{\frac{2}{3}}-1)^{\frac{3}{2}}.
\end{eqnarray}
Maximum value of $\mathbf{H}$ is thus given by:
\begin{eqnarray}
	\mathbf{H}]_{\textmd{\tiny{Max}}}&=&16(3+(2^{\frac{2}{3}}-1)^{\frac{3}{2}})\nonumber
\end{eqnarray}
Minimum value of QBER $\mathbf{Q}_0$ is thus given by:
\begin{eqnarray}\label{appna7}
	\mathbf{Q}_0&=&1-\frac{16(3+(2^{\frac{2}{3}}-1)^{\frac{3}{2}})}{2^6}\nonumber\\
	&=&1-\frac{3+(2^{\frac{2}{3}}-1)^{\frac{3}{2}}}{4} \approxeq 0.13745.
\end{eqnarray}
	\section*{Appendix.C}
	\textit{Proof of theorem.\ref{theo2}:} Here finding critical error rate $\mathbf{Q}_0^{'}$ reduces to the task of finding maximum value of $\mathbf{H}$(Eq.(\ref{cr6})) with respect to $t_{i,j}$ subject to the following respective constraints:
	\begin{itemize}
		\item [(i)] $c$$=$$3:$ Bell-CHSH inequality is not violated by bipartite correlations in none of the 3 pairs $(A_1,A_2),$ $(A_1,A_3)$ and $(A_1,A_4)$. 
		\item [(ii)] $c$$=$$2:$ Bell-CHSH inequality is not violated by bipartite correlations in two of the 3 pairs $(A_1,A_2),$ $(A_1,A_3)$ and $(A_1,A_4)$.
		\item [(iii)] $c$$=$$1:$ Bell-CHSH inequality is not violated by bipartite correlations in only one of the 3 pairs $(A_1,A_2),$ $(A_1,A_3)$ and $(A_1,A_4)$.  
	\end{itemize} 
	\par\textit{Proof of (i):} Constraints applicable here are:
	\begin{eqnarray}\label{appb0}
		t_{i1}^2+t_{i2}^2&\leq& 1,\,\,\forall i=1,2,3
	\end{eqnarray}
	Original maximization problem in this case is:	
	\begin{eqnarray}\label{appb1}
		\textmd{\small{Maximize}}&&\sum_{\substack{j_i=1,2\\\forall i=1,2,3}}\Pi_{i=1}^3(1+t_{i,j_i})\nonumber\\
		\textmd{\small{Sub To:}}&& 	t_{i1}^2+t_{i2}^2\leq 1,\,\,\forall i=1,2,3
	\end{eqnarray}
Let $M_j$$=$$\textmd{Max}_{i=1}^3 t_{i,j},\,\,j$$=$$1,2.$\\
Consider the following maximization problem:
	\begin{eqnarray}\label{appb4}
		\textmd{\small{Maximize}}&&M_1+M_2\nonumber\\
		\textmd{\small{Sub To:}}&&M_1^2+M_2^2=1\nonumber\\
		&&	M_1,M_2\geq 0
	\end{eqnarray}
Maxima obtained from above maximization problem(Eq.(\ref{appb4})) will be the maxima for the required maximization problem(Eq.(\ref{appb1})) provided the maxima satisfy the original constraint(Eq.(\ref{appb0})).\\
Lagrangian($\mathcal{L}_1(M_1,M_2),$say) corresponding to the maximization problem(Eq.(\ref{appb4})) is given by:
	\begin{eqnarray}\label{appb6}
		\mathcal{L}_1(M_1,M_2)&=&M_1+M_2+\lambda(M_1^2+M_2^2-1)
	\end{eqnarray}
	Using the Lagrangian, the maxima($K_1$,say) of the above maximization problem(Eq.(\ref{appb4})) is given by:
	\begin{eqnarray}\label{appb8}
		K_1&=&(2^{-\frac{1}{2}},2^{-\frac{1}{2}})
	\end{eqnarray}
$K_1$ satisfies the original constraint(Eq.\ref{appb0}). So, $K_1$ is the maxima of the original maximization problem(Eq.(\ref{appb1})) also.\\
Maximum value of $\mathbf{H}$ is thus given by:
	\begin{eqnarray}
		\mathbf{H}]_{\textmd{\tiny{Max}}}&=&8(1+2^{-\frac{1}{2}})^3\nonumber\\
		&=&2\sqrt{2}(1+\sqrt{2})^3
	\end{eqnarray}
	Minimum value of QBER $\mathbf{Q}_0^{'}$ in case.(i) is thus given by:
	\begin{eqnarray}\label{appb9}
		\mathbf{Q}_0^{'}&=&1-\frac{\sqrt{2}(1+\sqrt{2})^3}{32}\approxeq 0.37814
	\end{eqnarray}
	\par\textit{Proof of (ii):} W.L.O.G., let Bell-CHSH be violated by bipartite correlations arising from local measurements of $A_1$ and $A_2$ only, i.e., let $\rho_1$ only show Bell-CHSH violation. So,
	\begin{eqnarray}
		1< t_{1,1}^2+t_{1,2}^2&\leq& 2
	\end{eqnarray}
	
	Here $\mathbf{H}$ thus becomes upper bounded as follows:
	\begin{eqnarray}\label{appb10}
		\mathbf{H}&=& \sum_{\substack{j_i=1,2\\\forall i=1,2,3}}\Pi_{i=1}^3(1+t_{i,j_i})\nonumber\\
		&\leq& 4 \sum_{\substack{j_i=1,2\\\forall i=2,3}}\Pi_{i=2}^3(1+t_{i,j_i})
	\end{eqnarray}
Equality in last line in Eq.(\ref{appb10}) holds when $t_{1,1}$$=$$t_{1,2}$$=$$1,$ i.e., when $\rho_1$ shows maximum Bell-CHSH violation.\\
	Constraints to be satisfied by $t_{2,1},t_{2,2},t_{3,1}$ and $t_{3,2}$ are:
	\begin{eqnarray}\label{appb11}
		t_{i,1}^2+t_{i,2}^2&\leq& 1,\,\,\forall i=2,3.
	\end{eqnarray}
	So the original maximization problem in this case is:
	\begin{eqnarray}\label{appb110}
		\textmd{\small{Maximize}}&&\sum_{\substack{j_i=1,2\\\forall i=2,3}}\Pi_{i=2}^3(1+t_{i,j_i})\nonumber\\
		\textmd{\small{Sub To:}}&& 	t_{i1}^2+t_{i2}^2\leq 1,\,\,\forall i=2,3
	\end{eqnarray}
	Let $\textmd{Max}_{i=2}^3 t_{i,j}$$=$$N_j,$ $j$$=$$1,2.$ \\
	Then,
	\begin{eqnarray}\label{appb12}
		\sum_{\substack{j_i=1,2\\\forall i=2,3}}\Pi_{i=2}^3(1+t_{i,j_i})&\leq& (2+M_1+M_2)^2
	\end{eqnarray}
	Also, 
	\begin{eqnarray}
		t_{i,1}^2+t_{i,2}^2&\leq& N_1^2+N_2^2 ,\,\,\forall i=2,3.	
	\end{eqnarray}
	Approaching as in (i), the maximization problem to be solved here is of the form:
	\begin{eqnarray}\label{appb13}
		\textmd{\small{Maximize}}&&N_1+N_2\nonumber\\
		\textmd{\small{Sub To:}}&&N_1^2+N_2^2=1\nonumber\\
		&&	N_1,N_2\geq 0
	\end{eqnarray}
	It may be noted that above maximization problem is same as that of the maximization problem in previous case. So, the maxima($K_2$,say) of the above maximization problem(Eq.(\ref{appb13})) is given by:
	\begin{eqnarray}\label{appb14}
		K_2&=&K_1=(2^{-\frac{1}{2}},2^{-\frac{1}{2}})
	\end{eqnarray}
	$K_2$ satisfies the original constraint(Eq.\ref{appb12}). So, $K_2$ is the maxima of the original maximization problem(Eq.(\ref{appb110})) also.\\
	Here maximum value of $\mathbf{H}$ is thus given by:
	\begin{eqnarray}
		\mathbf{H}]_{\textmd{\tiny{Max}}}&=&16(1+2^{-\frac{1}{2}})^2\nonumber\\
		&=&8(1+\sqrt{2})^2
	\end{eqnarray}
	Minimum value of QBER $\mathbf{Q}_0^{'}$ in case.(ii) is thus given by:
	\begin{eqnarray}\label{appb15}
		\mathbf{Q}_0^{'}&=&1-\frac{(1+\sqrt{2})^2}{8}\approxeq 0.27145.
	\end{eqnarray}

	\par\textit{Proof of (iii):} W.L.O.G., let Bell-CHSH be violated by bipartite correlations arising from local measurements of $A_1$ and $A_2$ and also from that of $A_1$ and $A_3$ , i.e., let each $\rho_1,\rho_2$ show Bell-CHSH violation.\\
	Hence,
	\begin{eqnarray}
		1< t_{i,1}^2+t_{i,2}^2&\leq& 2\,\,\forall i=1,2.
	\end{eqnarray}
	Here $\mathbf{H}$ thus becomes upper bounded as follows:
	\begin{eqnarray}\label{appb17}
		\mathbf{H}&\leq& 16\sum_{j=1}^2(1+t_{3,j})\nonumber\\
		&=&16(2+t_{3,1}+t_{3,2})
	\end{eqnarray}
	Equality in last line in Eq.(\ref{appb17}) holds when $t_{i,1}$$=$$t_{i,2}$$=$$1\forall i$$=$$1,2$ i.e., when $\rho_1,\rho_2$ both show maximum violation.\\
	Constraints to be satisfied by $t_{3,1}$ and $t_{3,2}$ is:
	\begin{eqnarray}\label{appb18}
		t_{3,1}^2+t_{3,2}^2&\leq& 1.
	\end{eqnarray}
	So the original maximization problem in this case is:
	\begin{eqnarray}\label{appb19}
		\textmd{\small{Maximize }}&& 2+t_{3,1}+t_{3,2}\nonumber\\
		\textmd{\small{Sub To: }}&& 	t_{31}^2+t_{32}^2\leq 1\nonumber\\
		&& t_{3,1},t_{3,2}\geq 0.
	\end{eqnarray}
	Proceeding as in last two cases, solution of above maximization problem(Eq.(\ref{appb19})) is given by:
	Here maximum value of $\mathbf{H}$ is thus given by:
	\begin{eqnarray}
		\mathbf{H}]_{\textmd{\tiny{Max}}}&=&16(2+\sqrt{2})
			\end{eqnarray}
	Minimum value of QBER $\mathbf{Q}_0^{'}$ in case.(ii) is thus given by:
	\begin{eqnarray}\label{appb20}
		\mathbf{Q}_0^{'}&=&1-\frac{2+\sqrt{2}}{4}\approxeq 0.3784.
	\end{eqnarray}
Combining Eqs.(\ref{appb9},\ref{appb15},\ref{appb20}), one gets:\\
$\mathbf{Q}_0^{'}$$=$$1-(\frac{1+\sqrt{2}}{2\sqrt{2}})^c$ where $c$($1$$\leq$$c$$\leq$$3$) denote number of pairs of central and extreme parties that do not observe Bell-CHSH violation. \\
This completes the proof of the theorem.  $\blacksquare$

\end{document}